\definecolor{webgreen}{rgb}{0,.5,0}
\definecolor{webbrown}{rgb}{.6,0,0}
\theoremstyle{plain}
\newtheorem{theorem}{Theorem}
\newtheorem{corollary}[theorem]{Corollary}
\newtheorem{lemma}[theorem]{Lemma}
\newtheorem{proposition}[theorem]{Proposition}
\theoremstyle{definition}
\newtheorem{conjecture}[theorem]{Conjecture}
\newtheorem{problem}{Problem}
\theoremstyle{remark}
\newtheorem{remark}[theorem]{Remark}
\newcommand{\seqnum}[1]{\href{https://oeis.org/#1}{\underline{#1}}}
\begin{document}

\author{Anna E. Frid \\
Aix Marseille Univ, CNRS \\
Centrale Marseille, I2M \\ Marseille, France \\ \url{anna.e.frid@gmail.com}}
 
\title{Prefix palindromic length of the Thue-Morse word}
\date{~}

\maketitle

\begin{abstract}
The prefix palindromic length $PPL_u(n)$ of an infinite word $u$ is the minimal number of concatenated palindromes needed to express the prefix of length $n$ of $u$. In a 2013 paper with Puzynina and Zamboni we stated the conjecture that $PPL_u(n)$ is unbounded for every infinite word $u$ which
is not ultimately periodic. Up to now, the conjecture has been proven for almost all words, including all words avoiding some power $p$. However, even in that simple case the existing upper bound for the minimal number $n$ such that $PPL_u(n)>K$ is greater than any constant to the power $K$. Precise values of $PPL_u(n)$ are not known even for simplest examples like the Fibonacci word.

In this paper, we give the first example of such a precise computation and compute the function of the prefix palindromic length \seqnum{A307319} of the Thue-Morse word \seqnum{A010060}, a famous test object for all functions on infinite words. It happens that this sequence is $2$-regular, which raises the question if this fact can be generalized to all automatic sequences.
\end{abstract}

\section{Introduction}
By the usual definition, a palindrome is a finite word $p=p[1]\cdots p[n]$ on a finite alphabet such that $p[i]=p[n-i+1]$ for every $i$. We consider decompositions of a finite word $s$ to a minimal number of palindromes which we call a {\it palindromic length} of $s$: for example, the palindromic length of $abbaba$ is equal to 3 since this word is not a concatenation of two palindromes, but $abbaba=(abba)(b)(a)=(a)(bb)(aba)$. A decomposition to a minimal possible number of palindromes is called {\it optimal}.

In this paper, we are interested in the palindromic length of prefixes of an infinite word $u=u[1]\cdots u[n]\cdots$, denoted by $PPL_u(n)$. 
The length of the shortest prefix of $u$ of palindromic length $k$ is denoted by $SP_u(k)$ and can be considered as a kind of an inverse function to $PPL_u(n)$. Clearly, $SP_u(k)$ can be infinite: for example, if $u=abababab\cdots$, $SP_u(k)=\infty$ for every $k \geq 3$. 

The following conjecture was first formulated, in slightly different terms, in our 2013 paper with Puzynina and Zamboni \cite{fpz}.

\begin{conjecture}\label{c1}
 For every non ultimately periodic word $u$, the function $PPL_u(n)$ is unbounded, or, which is the same, $SP_u(k)<\infty$ for every $k \in \mathbb N$.
\end{conjecture}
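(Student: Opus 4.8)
The cleanest entry point is the contrapositive, and I would first attack the regime singled out in the abstract: a word $u$ avoiding some fixed power $p$. Such a $u$ is automatically not ultimately periodic, since an ultimately periodic word contains arbitrarily high powers of its period. So it suffices to prove that $PPL_u$ is unbounded, and I would do this by contradiction: assume $PPL_u(n)\le K$ for all $n$ and manufacture a $p$-th power inside $u$, contradicting the avoidance hypothesis. The engine throughout is the classical correspondence between palindromes and periods, in the following form: if a finite word has two palindromic suffixes of lengths $a>b$, then its suffix of length $a$ has period $a-b$. Indeed, being a palindrome, that suffix reads the same from either end, so the shorter palindromic suffix of length $b$ occurs also as its prefix; a word with equal prefix and suffix (a border) of length $b$ has period $a-b$, hence contains a power of exponent $a/(a-b)$, an honest $p$-th power as soon as $a-b\le a/p$.

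The bound $PPL_u(n)\le K$ supplies the long palindromes needed to feed this engine. In an optimal decomposition $u[1..n]=p_1\cdots p_j$ with $j\le K$, the block lengths sum to $n$, so some block $p_i$ has length at least $n/K$; if it ends at position $t\le n$, the prefix $u[1..t]$ carries the palindromic suffix $p_i$, of length at least $t/K$. Letting $n\to\infty$ I thus obtain, for arbitrarily large prefixes, a palindromic suffix whose length is a fixed fraction of the prefix length.

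The combinatorial heart is to convert these long palindromes into two palindromic structures that are close enough to trigger a high power. Here I would slide the endpoint and use the recursion
\[
PPL_u(n)=1+\min\{\,PPL_u(n-\ell)\;:\;u[n-\ell+1..n]\text{ is a palindrome}\,\},
\]
together with a pigeonhole argument: as $n$ runs over a long window, the long palindromic suffixes (or their centres) cannot all be spread out, so two of them must have lengths differing by at most $a/p$, respectively centres closer than $a/(2p)$. Either configuration forces, by the border-to-period mechanism above, a run of period $q$ and length at least $pq$, i.e. a $p$-th power. Making the window explicit is what produces the super-exponential value of $SP_u(K)$ mentioned in the abstract, since each reduction in palindromic length only guarantees control on a window whose size blows up with $K$.

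The step I expect to be the genuine obstacle --- and the reason the full statement is still only a conjecture --- is the passage from the power-avoiding regime to an arbitrary non-ultimately-periodic word. When a fixed power is forbidden, producing a single sufficiently high power already yields a contradiction, so one never has to control where the periodicity lives. In general there is no forbidden power to contradict: bounded $PPL_u$ must instead be shown to force a genuinely \emph{global} period, so the many local periods manufactured above have to be proved compatible and to persist all the way to infinity. Guaranteeing that successive periodic runs overlap in more than the sum of their periods, so that Fine--Wilf fuses them into one common period that never breaks, is the delicate point at which the argument currently stalls.
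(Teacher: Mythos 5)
You were asked to prove Conjecture~\ref{c1}, and the first thing to say is that the paper contains no proof of it: it is stated as a conjecture precisely because it is open. The paper only records partial results, all by citation --- the $p$-power-free case and the more general $(p,l)$-condition case from \cite{fpz}, the Sturmian case from \cite{frid}, and Saarela's equivalence \cite{saarela} between the prefix and factor versions --- so any proposal that claimed a complete proof of the full statement would have to be rejected out of hand. Yours, to its credit, does not claim one: it reconstructs in outline the strategy of \cite{fpz} for the $p$-power-free regime and then explicitly concedes that the passage to an arbitrary non-ultimately-periodic word stalls. That concession is accurate, and your diagnosis of where it stalls --- bounded $PPL_u$ would have to force a single \emph{global} period, and the Fine--Wilf gluing of the many local periodic runs into one persistent period is the step nobody knows how to do --- is a fair description of why the general conjecture is hard. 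So the verdict is: honest and well-oriented, but as a proof of the stated conjecture it has an irreparable gap, because the statement is open.

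Even restricted to the $p$-power-free case you sketch, there is a concrete hole in the pigeonhole step. Your engine is sound --- two palindromic suffixes of the \emph{same} word, of lengths $a>b$, give a border and hence period $a-b$; two long palindromes with centres at distance $d$ force period $2d$ on their overlap --- but the long palindromic blocks you extract from optimal decompositions of $u[1..n]$ end at \emph{different} positions $t$ as $n$ varies, and two long palindromic suffixes of different prefixes yield neither a border nor a period until they are aligned. You must show that within a window of controlled size there are two long palindromic suffixes of one and the same prefix, or two palindromes whose overlap exceeds twice their centre distance; arranging this alignment, using the $\pm 1$ continuity of $PPL_u$ (the paper's Lemma~\ref{WRITEIT}) and an induction on the bound $K$, is the actual technical content of \cite{fpz}, not a routine pigeonhole. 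It is also exactly what makes the resulting bound on $SP_u(K)$ super-exponential, which is why the paper states the sharpening as a separate open Conjecture~\ref{c2}. Note finally that what the paper itself proves (Theorem~\ref{t:tm} and its corollaries) is orthogonal to your route: exact $4$-adic recurrences for $PPL_t(n)$ for the single word $t$, giving $SP_t(k+3)=16\,SP_t(k)-6$, rather than any general unboundedness argument.
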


In fact, there were two versions of the conjecture considered in our paper \cite{fpz}, one with the prefix palindromic length and the other with the palindromic length of any factor of $u$. However, Saarela \cite{saarela} later proved the equivalence of these two statements.

In the same initial paper \cite{fpz}, the conjecture was proved for the case when $u$ is $p$-power-free for some $p$, as well as for the more general case when a so-called $(p,l)$-condition holds for some $p$ and $l$. Due to the above-mentioned result by Saarela, this means that the conjecture is proven for almost all words, since almost all words contain as long $p$-power-free factors as needed. However, for some cases, the conjecture remains unsolved, and, for example, its proof for all Sturmian words \cite{frid} required a special technique. 

Most  published papers on palindromic length concern algorithmic aspects; in particular, there are several fast effective algorithms for computing $PPL_u(n)$ \cite{fici,shur2,shur1}.

The original proof of Conjecture \ref{c1} for the $p$-power-free words is not constructive. The upper bound for a length $N$ such that $PPL(N)\geq K$ for a given $K$ is given as a solution of a transcendental equation and grows with $K$ faster than any exponential function. However, this does not look the best possible bound. So, it is reasonable to state the following conjecture.

\begin{conjecture}\label{c2} If a word $u$ is $p$-power free for some $p$, then $$\lim \sup \frac{PPL_u(n)}{\ln n}>0,$$ or, which is the same, $SP_u(k)\leq C^k$ for some $C$. The constant $C$ can be chosen independently of $u$ as a function of $p$.
\end{conjecture}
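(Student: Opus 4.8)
The plan is to replace the non-constructive argument of \cite{fpz} by an explicit multiplicative recurrence for $SP_u$. Since $SP_u(1)=1$ and, by definition, the prefix palindromic length grows by at most $1$ when one letter is appended (concatenate it as a one-letter palindrome), the first prefix whose value reaches a given level is reached by an increment of exactly $1$, so $SP_u(k)$ is simply the first hitting time of level $k$. It therefore suffices to prove a \emph{single-step} bound of the form $SP_u(k+1)\le C\,SP_u(k)$ with $C=C(p)$ depending only on the power-freeness exponent; induction on $k$ then yields $SP_u(k)\le C^{k}$, which is the claimed estimate and, for the inverse function, gives $\limsup_n PPL_u(n)/\ln n\ge 1/\ln C>0$. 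Concretely, writing $N=SP_u(k)$, I would try to show that the palindromic length is \emph{forced} to exceed $k$ somewhere among the prefixes of length at most $CN$: if $PPL_u(n)\le k$ held throughout the interval $N< n\le CN$, I want to extract a forbidden high power in $u$.

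The combinatorial engine I would use is the interaction between palindromes and periodicity. If a word has two palindromic suffixes of lengths $a>b$ with $a<2b$, then it has period $a-b$; more generally, a family of palindromic suffixes whose lengths cluster within a fixed geometric band forces a factor that is an arbitrarily high power. Under $p$-power-freeness this caps, \emph{uniformly in terms of $p$}, how many palindromic suffixes of comparable length a prefix of $u$ may possess. Reading the defining recurrence $PPL_u(n)=1+\min\{PPL_u(n-|P|):P \text{ a palindromic suffix of } u[1..n]\}$ as a shortest-path problem on the graph whose edges $m\to n$ record palindromic factors $u[m+1..n]$, the sparsity of admissible palindrome lengths should bound how far one can travel while keeping the value $\le k$. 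The target is a quantitative statement: an interval on which $PPL_u\le k$ has length at most $C^{k}$, the constant arising only from the Fine--Wilf-type period estimates, which are themselves uniform in $p$; this is what would make $C$ a function of $p$ alone, independent of $u$, as the conjecture demands.

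The step I expect to be the genuine obstacle is controlling the \emph{downward} behaviour of $PPL_u$. Appending a letter raises the value by at most one, but encountering a long palindromic suffix can collapse $PPL_u$ back to a small value, and it is exactly these collapses that the recurrence permits arbitrarily often. The existing proof in \cite{fpz} loses all quantitative control here, which is why its bound for the first $n$ with $PPL_u(n)>K$ exceeds every exponential in $K$. Turning the qualitative impossibility of a collapse-filled interval into an explicit length bound seems to require a global argument on the palindromic-suffix (eertree) structure of $u$ rather than the local, position-by-position estimates used so far: one must show that too many well-spaced collapses would themselves manufacture either periodicity or a high power. I regard this anti-collapse lemma, made uniform in $p$, as the crux; the reduction and the period lemmas above are, by comparison, routine, and it is the absence of such a lemma that keeps the conjecture open.
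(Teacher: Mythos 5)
You have not proved the statement, and you say so yourself: the ``anti-collapse lemma'' you identify as the crux is left entirely unproved, and it is not a technical loose end but the whole content of the claim. Be aware also of the status of the statement in the paper: this is Conjecture~\ref{c2}, posed as open; the paper contains no proof of it for general $p$-power-free words. What the paper does prove is the single instance $u=t$, and by a completely different, exact method: the recurrences of Theorem~\ref{t:tm}, established by induction using the rigid type structure of palindrome occurrences in $t$ (Propositions~\ref{p:4-m}--\ref{p:4n}), yield $SP_t(k+3)=16\,SP_t(k)-6$ (Proposition~\ref{p:sp}) and hence $\limsup PPL_t(n)/\ln n=3/(4\ln 2)>0$. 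That computation confirms the conjectured behaviour for one word but uses self-similarity under $\tau$ that has no analogue for an arbitrary $p$-power-free word, so it cannot serve as the uniform mechanism your plan requires.

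As for the plan itself: the reduction is sound --- by Lemma~\ref{WRITEIT} the first differences are in $\{-1,0,1\}$, so $SP_u(k)$ is a hitting time, a one-step bound $SP_u(k+1)\leq C\,SP_u(k)$ with $SP_u(1)$ bounded would give $SP_u(k)\leq C^k$, and evaluating $PPL_u$ at $n=SP_u(k)$ gives $\limsup PPL_u(n)/\ln n\geq 1/\ln C$. The period lemmas are also standard: palindromic suffixes of comparable lengths force periods, and $p$-power-freeness caps the length of each arithmetic progression of palindromic suffix lengths. But this sparsity does not deliver the quantitative bound you need. Even in a $p$-power-free word, a prefix of length $n$ can have palindromic suffixes at $\Theta(\log n)$ distinct scales, so the shortest-path graph you describe has unboundedly many admissible edge scales at each vertex; nothing in the local Fine--Wilf estimates prevents a path of value at most $k$ from exploiting collapses at many different scales and travelling much farther than $C^k$. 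This is precisely the point where the argument of \cite{fpz} loses control and produces a bound growing faster than any exponential; your proposal names the missing global statement but supplies no argument for it, derives no contradiction from a long low-value interval, and hence leaves the conjecture exactly where the paper leaves it.
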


In this paper, we consider in detail the case of the Thue-Morse word \seqnum{A010060}, a classical example of a word avoiding powers greater than 2 \cite{a_sh_ubi}. We give precise formulas for its prefix palindromic length and discuss its properties. This is a simple but necessary step before considering all $p$-power-free words, or all fixed points of uniform morphisms, or any other family of words containing the Thue-Morse word.

The results of this paper, in less detail, have been announced in the proceedings of DLT 2019 \cite{dlt}, together with some other results on the prefix palindromic length.

Throughout this paper, we use the notation $w(i..j]=w[i+1]..w[j]$ for a factor of a finite or infinite word $w$ starting at position $i+1$ and ending at $j$.

The following lemma is a particular case of a statement by Saarela \cite[L.\ 6]{saarela}. We give its proof for the sake of completeness.
\begin{lemma}\label{WRITEIT}
 For every word $u$ and for every $n\geq 0$, we have
 \[PPL_u(n)-1\leq PPL_u(n+1) \leq PPL_{u}(n)+1.\]
\end{lemma}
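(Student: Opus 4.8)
The plan is to prove the two inequalities separately, and in each case to manipulate an optimal palindromic decomposition of one prefix so as to build a valid (though not necessarily optimal) decomposition of the neighbouring prefix. The resulting block count then bounds the other value of $PPL_u$.

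First I would establish the upper bound $PPL_u(n+1) \leq PPL_u(n) + 1$. I take an optimal decomposition of the prefix $u(0..n]$ into $PPL_u(n)$ palindromes and append the single letter $u[n+1]$, which is itself a palindrome of length one. This produces a decomposition of $u(0..n+1]$ into $PPL_u(n)+1$ palindromes, and the inequality is immediate.

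For the lower bound, which I rewrite as $PPL_u(n) \leq PPL_u(n+1)+1$, I start from an optimal decomposition $u(0..n+1] = p_1 \cdots p_k$ with $k = PPL_u(n+1)$ and examine its final block $p_k$. If $p_k$ has length one, deleting it already yields a decomposition of $u(0..n]$ into $k-1$ palindromes, which is more than enough. The interesting case is $p_k$ of length $m \geq 2$, where I write $p_k = a v a$ with $a = p_k[1] = p_k[m]$ the first (equivalently last) letter and $v = p_k(1..m-1]$ the inner factor, which is again a palindrome because $p_k$ is.

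The main obstacle, and the only genuinely subtle point, is that removing the last letter of a palindrome does not in general leave a palindrome, so I cannot simply shorten $p_k$ in place. The remedy is to note that $p_k$ with its last letter deleted equals $a v$, which splits into the two palindromes $a$ and $v$ (the block $v$ being omitted when $m = 2$). Hence $u(0..n] = p_1 \cdots p_{k-1} \, a \, v$ is a decomposition into at most $(k-1)+2 = k+1$ palindromes, giving $PPL_u(n) \leq PPL_u(n+1)+1$ and completing the argument. The whole proof is elementary; the single thing to handle carefully is this one extra palindrome incurred when peeling a letter off a nontrivial palindromic block.
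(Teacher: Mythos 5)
Your proof is correct and matches the paper's own argument essentially step for step: the upper bound by appending the one-letter palindrome $u[n+1]$, and the lower bound by writing the last block of an optimal decomposition of $u(0..n+1]$ as $a v a$ (with $v$ a possibly empty palindrome) and replacing it with the two palindromes $a$ and $v$, at the cost of one extra block. The case split on whether the final block has length one, and the careful handling of the empty inner palindrome, are exactly as in the paper, so nothing further is needed.
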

\begin{proof} Consider the prefixes $v$ and $va$ of $u$ of length $n$ and $n+1$ respectively. Clearly, for any decomposition $u=p_1\cdots p_k$ to $k$ palindromes $ua=p_1\cdots p_ka$ is a decomposition of $ua$ to $k+1$ palindrome. On the other hand, for any palindromic decomposition $ua=q_1\cdots q_k$, we have either $q_k=a$, and then $u=q_1\cdots q_{k-1}$, or $q_k=ap_k a$, for a (possibly empty) palindrome $p_k$, and then $u=q_1\cdots q_{k-1}ap_k$ is a decomposition of $u$ to $k+1$ palindromes. If initial decompositions were optimal, this gives $PPL_u(n+1)\leq PPL_u(n)+1$ and $PPL_u(n)\leq PPL_u(n+1)+1$. \end{proof}

So, the first differences of the prefix palindromic length can be equal only to -1, 0, or 1, and the graph never jumps.

In this paper, it is convenient to consider the famous Thue-Morse word \seqnum{A010060} 
\[t=abbabaabbaababba\cdots\]
as the fixed point starting with $a$ of the morphism
\[\tau: \begin{cases} a \to abba,\\ b \to baab. \end{cases}\]
Both images of letters under this morphism, which is the square of the usual Thue-Morse morphism $a \to ab, b \to ba$, are palindromes.

It is thus easy to see that every prefix of the Thue-Morse word of length $4^k$ is a palindrome, so that $PPL_t(4^k)=1$ for all $k\geq 0$. The first values of $PPL_t(n)$ and of $SP_t(k)$ are given below, see also the \seqnum{A307319} entry of the OEIS \cite{oeis}.
\medskip
\noindent
\begin{center}
\begin{tabular}{|c||c|c|c|c|c|c|c|c|c|c|c|c|c|c|c|c|}
  \hline
  $n$ &1 & 2& 3&4&5&6&7&8&9&10&11&12&13&14&15&16 \\
  \hline
  $PPL_t(n)$ & 1& 2& 2&1& 2& 3& 3&2& 3&4& 3& 2& 3& 3& 2& 1\\
  \hline
   
\end{tabular}
\end{center}

\medskip
As for the shortest prefix of a given palindromic length, we give its length in decimal and quaternary notation; see also the \seqnum{A320429} entry of the OEIS \cite{oeis}.

\medskip
\begin{center}
\begin{tabular}{|c||c|c|c|c|c|c|c|c|}
  \hline
  $k$ &1 & 2& 3&4&5&6&7&8 \\
  \hline
  $SP_t(k)$ & 1& 2& 6&10& 26& 90& 154&410\\
  \hline
  4-ary & 1&2&12&22&122&1122&2122&12122\\
  \hline
\end{tabular}
\end{center}

Now we are going to prove the self-similarity properties which we observe.
\section{Recurrence relations}

\begin{theorem}\label{t:tm}
The following identities hold for all $n\geq 0$:
\begin{align}
 PPL_t(4n)&=PPL_t(n), \label{e:4n}\\
PPL_t(4n+1)&=PPL_t(n)+1,\label{e:4n+1}\\
PPL_t(4n+2)&=\min(PPL_t(n),PPL_t(n+1))+2,\label{e:4n+2}\\
PPL_t(4n+3)&=PPL_t(n+1)+1.\label{e:4n+3}
\end{align}
\end{theorem}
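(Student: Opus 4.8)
The plan is to read the four identities as matching upper and lower bounds and to reduce everything to the single self-similar mechanism coming from $t=\tau(t)$. The starting observation is that the prefix of $t$ of length $4n$ is exactly $\tau(p)$, where $p=t[1]\cdots t[n]$ is the prefix of length $n$; for the other residues the prefix of length $4n+r$ is $\tau(p)$ followed by the length-$r$ prefix of the block $\tau(t[n+1])$, so it is obtained from the prefix of length $4n$ by appending one or two letters, or from the prefix of length $4n+4=\tau(t[1]\cdots t[n+1])$ by deleting two or one letters. The key algebraic fact is that $\tau(a)=abba$ and $\tau(b)=baab$ are palindromes of the shape $y\widetilde y$, whence $\widetilde{\tau(u)}=\tau(\widetilde u)$ for every word $u$; since $\tau$ is injective, this gives that $\tau(u)$ is a palindrome if and only if $u$ is.

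First I would dispose of the upper bounds, which are cheap. Applying $\tau$ to an optimal palindromic decomposition of the prefix of length $n$, and using that $\tau$ preserves palindromes, gives $PPL_t(4n)\le PPL_t(n)$. Every other upper bound then follows from this one together with Lemma~\ref{WRITEIT}: appending one or two letters to the prefix of length $4n$ yields $PPL_t(4n+1)\le PPL_t(n)+1$ and $PPL_t(4n+2)\le PPL_t(n)+2$, while deleting one or two letters from the prefix of length $4n+4$ yields $PPL_t(4n+3)\le PPL_t(n+1)+1$ and $PPL_t(4n+2)\le PPL_t(n+1)+2$. Taking the smaller of the two estimates for $4n+2$ produces the minimum in \eqref{e:4n+2}.

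The real content is in the lower bounds, and here I would argue by \emph{desubstitution}. Fix an optimal decomposition $q_1\cdots q_m$ of the relevant prefix and examine its cut points modulo $4$. If every cut point were a multiple of $4$, then each $q_i$ would be a block-aligned palindromic factor $\tau(u_i)$, the equivalence above would force each $u_i$ to be a palindrome, and pulling back would give a decomposition of the prefix of length $n$ (respectively $n+1$) into $m$ palindromes. The residue $r$ forces a controlled number of unavoidable non-aligned pieces at the end: none for $r=0$, one trailing letter for $r=1$, two for $r=2$, and one letter missing relative to the prefix of length $4n+4$ for $r=3$. Counting these boundary corrections turns the pullback into $PPL_t(4n)\ge PPL_t(n)$, $PPL_t(4n+1)\ge PPL_t(n)+1$, $PPL_t(4n+3)\ge PPL_t(n+1)+1$, and, according as the last partial block is completed toward $n$ or $n+1$, $PPL_t(4n+2)\ge\min(PPL_t(n),PPL_t(n+1))+2$.

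The hard part is justifying that the cut points can indeed be taken at multiples of $4$ without increasing $m$, i.e.\ that a palindrome of an optimal decomposition cannot straddle a block boundary in a way that resists realignment. This is exactly where the combinatorics special to the Thue-Morse word must enter: its overlap-freeness, together with the very restricted list of admissible palindrome lengths in $t$ and of the residues modulo $4$ of their centers, should show that any palindrome ending or beginning off a multiple of $4$ can be shifted to the nearest block boundary, or absorbed, at no cost in the number of parts, with the only surviving exceptions confined to the last one or two positions as dictated by $r$. I expect essentially all of the genuine work, and all use of properties specific to $t$, to be concentrated in this alignment step; once it is in place, the induction on $n$ driven by the four recurrences closes immediately.
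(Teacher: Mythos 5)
Your upper-bound half is fine and coincides with the paper's (the bijection $\tau$ between palindromic decompositions of $t(0..n]$ and $0$-decompositions of $t(0..4n]$, plus Lemma~\ref{WRITEIT}). But the lower-bound half rests entirely on the ``alignment step,'' which you explicitly leave unproven (``I expect essentially all of the genuine work\dots to be concentrated in this alignment step''), and as you state it the step is actually \emph{false}. You claim that in an optimal decomposition the non-aligned pieces can be confined to the last one or two positions, with all other cuts moved to multiples of $4$ at no cost. Take $n=3$, prefix length $14$: here $PPL_t(14)=3$, realized by $(a)(b)\,t(2..14]$, where $t(2..14]$ is the palindrome obtained by twice shrinking the length-$16$ palindromic prefix; the exceptional letters sit at the \emph{beginning}, and the long palindrome straddles every block boundary. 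Any decomposition of your claimed normal form --- cuts at multiples of $4$ followed by at most two trailing singletons --- costs at least $PPL_t(12)+2=4>3$. The same happens at length $15$, where $PPL_t(15)=2$ via $(a)\,t(1..15]$. This is not an accident: the $PPL_t(n+1)$ branch of the minimum in \eqref{e:4n+2}, and identity \eqref{e:4n+3}, are realized precisely by decompositions whose palindromes are shrunk \emph{inward} from aligned ones, which your realignment scheme cannot produce. So the gap is not a routine verification you postponed; the normal form itself is wrong.

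The paper's actual mechanism is different and worth contrasting. It never realigns cuts within a fixed prefix. Instead, since every Thue--Morse palindrome of length $\neq 1,3$ has type $(m,4-m)$ (Proposition~\ref{p:4-m}), a prefix of length $\not\equiv 0 \bmod 4$ must contain a short palindrome in any decomposition; removing the last such length-$1$ piece and then growing or shrinking \emph{all subsequent} palindromes by one letter at each end (Proposition~\ref{p:+-1}) changes the total length by $\pm 1$ and drops the number of parts by one. This yields the key relation $PPL_t(4N+m)=\min\bigl(PPL_t(4N+m-1),PPL_t(4N+m+1)\bigr)+1$ for $m\in\{1,2,3\}$ (Proposition~\ref{p:min123}), linking \emph{neighboring lengths at level} $4N$ rather than pulling back to level $N$ directly. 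Only then, inside an induction on $N$ interleaving these relations with the Lipschitz bound $|PPL_t(N+1)-PPL_t(N)|\le 1$ and the inequality $PPL_t(4N-2)>PPL_t(4N)$, does the paper prove that optimal decompositions of $t(0..4N+4]$ are forced to be $0$-decompositions (Proposition~\ref{p:4n}) --- the statement you wanted to take as the starting point. To repair your proof you would need to replace the realignment claim by this shift-by-one bookkeeping (or something equivalent), since the desubstitution picture with trailing corrections cannot account for the inward-shrunk optimal decompositions at residues $2$ and $3$.
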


\bigskip
To prove Theorem \ref{t:tm}, we need several observations. First of all, the shortest non-empty palindrome factors in the Thue-Morse word are $a$, $b$, $aa$, $bb$, $aba$, $bab$, $abba$, $baab$. All  palindromes of length more than 3 are of even length and have $aa$ or $bb$ in the center: if $t(i..i+2k]$ is a palindrome, then $t(i+k-1,i+k+1]=aa$ or $bb$. 

Let us say that an occurrence of a palindrome $t(i..j]$ is of type $(i', j')$ if $i'$ is the residue of $i$ and $j'$ is the residue of $j$ modulo 4. For example, the palindrome $t(5..7]=aa$ is of type $(1,3)$, the palindrome $t(4,8]=baab$ is of type $(0,0)$, and the palindrome $t(7..9]=bb$ is of type $(3,1)$.

\begin{proposition}\label{p:4-m}
Every occurrence of a palindromic factor of length not equal to one or three to the Thue-Morse word is of a type $(m,4-m)$ for some $m\in \{0,1,2,3\}$.
\end{proposition}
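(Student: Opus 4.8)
The plan is to reduce the statement to a single parity observation about where the factors $aa$ and $bb$ can occur. I would start from a palindromic factor $t(i..j]$ whose length $j-i$ is not $1$ or $3$. By the observation stated just before the proposition, every palindrome of length greater than $3$ in $t$ has even length, and the palindromes of length $2$ are exactly $aa$ and $bb$; hence $j-i=2m$ is even, and the central pair of letters $t[i+m]\,t[i+m+1]$ is an occurrence of $aa$ or $bb$. Write $c=i+m$ for the position of the left letter of this central pair.

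The arithmetic core is then the identity $i+j=i+(i+2m)=2(i+m)=2c$. Saying that the occurrence is of type $(m,4-m)$ for some $m\in\{0,1,2,3\}$ is exactly the congruence $i+j\equiv 0\pmod 4$, so it suffices to prove that $c$ is even, i.e.\ that the left letter of any occurrence of $aa$ or $bb$ in $t$ lies at an even position. This converts the claim into a purely local question about where two equal adjacent letters can appear.

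To settle that question I would use the block structure coming from $\tau$. The word $t$ is the concatenation of blocks $\tau(t[1])\tau(t[2])\cdots$, each of length $4$ and equal to $abba$ or $baab$, the $k$-th block occupying the positions $4k-3,4k-2,4k-1,4k$ with residues $1,2,3,0$ modulo $4$. A pair of equal adjacent letters is therefore either internal to one block or straddles the boundary between two consecutive blocks. Inside $abba$ the only equal adjacent pair is the central $bb$, and inside $baab$ the only one is the central $aa$; in both cases the pair occupies the residues $(2,3)$, so its left letter sits at a position $\equiv 2\pmod 4$. A boundary pair occupies the last position $4k$ of one block and the first position $4k+1$ of the next, i.e.\ the residues $(0,1)$, so its left letter sits at a position $\equiv 0\pmod 4$. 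In every case the left letter stands at an even position, giving $c$ even and hence $i+j\equiv 0\pmod 4$.

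The one genuinely content-bearing step is the classification of occurrences of $aa$ and $bb$ in the previous paragraph, since it is what forces the left letter of a central pair onto an even position; the rest is the bookkeeping identity $i+j=2c$ together with the already-quoted fact that the palindromes of length $\neq 1,3$ are exactly those of even length. I expect no real obstacle beyond checking that the enumeration of adjacent pairs inside and across the blocks $abba$ and $baab$ is exhaustive, which is immediate because each block has length $4$, so only the three internal pairs and the single boundary pair need to be inspected.
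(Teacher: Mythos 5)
Your proof is correct and takes essentially the same route as the paper: both reduce the claim to the identity $i+j=2c$, where $c$ is the position of the left letter of the central $aa$ or $bb$, and then observe that $c$ is even. The only difference is that where the paper simply asserts that $aa$ and $bb$ occur only at positions $t(2l-1,2l+1]$, you verify this via the exhaustive inspection of adjacent pairs inside and across the $\tau$-blocks $abba$ and $baab$ --- a detail the paper leaves implicit, and which your argument supplies correctly.
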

\begin{proof} Every such a palindrome in the Thue-Morse word is of even length which we denote by $2k$, and every occurrence of it is of the form $t(i..i+2k]$. Its center $t(i+k-1,i+k+1]$ is equal to $aa$ or $bb$, and these two words always appear in $t$ at positions of the form $t(2l-1,2l+1]$ for some $l \geq 1$. So, $i+k-1=2l-1$, meaning that $i=2l-k$ and $i+2k=2l+k$. So, modulo $4$, we have $i+(i+2k)=4l \equiv 0$, that is, $i \equiv -(i+2k)$. \end{proof}

\medskip
Note that the palindromes of odd length in the Thue-Morse word are, first, $a$ and $b$, which can be of type $(0,1)$, $(1,2)$, $(2,3)$ or $(3,0)$, and second, $aba$ and $bab$, which can only be of type $(2,1)$ or $(3,2)$.

\begin{proposition}\label{p:+-1}
 Let $t(i..i+k]$ for $i>0$ be a palindrome of length $k>0$ and of type $(m,4-m)$ for some $m \neq 0$. Then $t(i-1..i+k+1]$ is also a palindrome, as well as $t(i+1..i+k-1]$.
\end{proposition}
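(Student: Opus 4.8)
The plan is to treat the two assertions separately, the second of which is immediate. Deleting the first and last letters of a palindrome always leaves a palindrome, so $t(i+1..i+k-1]$ is a palindrome with no further work; all the content lies in the claim that $t(i-1..i+k+1]$ is a palindrome. Since its ``core'' $t(i..i+k]$ is a palindrome by hypothesis, it suffices to show that the two newly adjoined outer letters coincide, i.e.\ that $t[i]=t[i+k+1]$. I would first record what the type gives us: from $t(i..i+k]$ being of type $(m,4-m)$ we read off $i\equiv m$ and $i+k\equiv -m \pmod 4$, and since odd palindromes never have a type of the form $(m,4-m)$ (as observed for the palindromes $a$, $b$, $aba$, $bab$ after Proposition~\ref{p:4-m}), the length $k$ is even and $t[i+1]=t[i+k]$ by palindromicity.

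The heart of the argument is local and uses only the block decomposition $t=\tau(t)=\tau(t[1])\tau(t[2])\cdots$ into blocks of length $4$, each block being $abba$ or $baab$; in either block the first and fourth letters agree, the second and third agree, and these two pairs are complementary. Thus, reading a block at residues $1,2,3,0 \pmod 4$, consecutive letters at residues $(1,2)$ and $(3,0)$ are complementary while those at residues $(2,3)$ are equal. Here the hypothesis $m\neq 0$ is decisive: because $i\equiv m\in\{1,2,3\}$, the position $i$ is not the last letter of its block, so $i$ and $i+1$ lie in a common block and $t[i]$ is determined by $t[i+1]$ through the within-block rule; and because $i+k\equiv -m$ is likewise nonzero, $i+k$ and $i+k+1$ lie in a common block and $t[i+k+1]$ is determined by $t[i+k]$ by a rule of the same kind. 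The two rules in fact coincide, since $m$ and $-m$ are simultaneously in $\{1,3\}$ (giving ``complement'') or simultaneously equal to $2$ (giving ``equal''). Combining with $t[i+1]=t[i+k]$ then yields $t[i]=t[i+k+1]$ in every case, which is exactly what is needed.

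The only real care required is the residue bookkeeping modulo $4$ together with the explicit within-block pattern, and I expect the single genuinely conceptual point to be the role of $m\neq 0$: it guarantees that each outer letter sits in the same length-$4$ block as the palindrome endpoint next to it, so that its value is governed by the rigid pattern $abba$, $baab$ rather than by an unrelated neighbouring block. Should $m=0$, one of the positions $i$ or $i+k$ would be the last letter of a block and its outer neighbour would begin a new block, whose letter depends on a different symbol of $t$; the clean reflection of the within-block relation then fails, which is consistent with the hypothesis excluding $m=0$.
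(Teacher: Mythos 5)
Your proposal is correct and follows essentially the same route as the paper: both arguments use the decomposition of $t$ into $\tau$-blocks $abba$, $baab$ of length $4$, with the hypothesis $m\neq 0$ guaranteeing that $t[i]$ shares a block with $t[i+1]$ and $t[i+k+1]$ shares a block with $t[i+k]$, so that $t[i+1]=t[i+k]$ forces $t[i]=t[i+k+1]$. Your bookkeeping via residue-dependent equal/complement adjacency rules (coinciding at residues $m$ and $-m$) is just a slightly more explicit rendering of the paper's observation that the first and last letters occupy symmetric positions in their blocks and thereby determine them; your parenthetical appeal to evenness of $k$ is unnecessary (the first and last letters of any palindrome agree) but harmless.
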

\begin{proof} The type of the palindrome is not $(0,0)$, meaning that its first and last letters $t[i+1]$ and $t[i+k]$ are not the first the last letters of $\tau$-images of letters. Since these  first and last letters are equal and their positions in $\tau$-images of letters are symmetric and determine their four-blocks $abba$ or $baab$, the letters $t[i]$ and $t[i+k+1]$ are also equal, and thus $t(i-1..i+k+1]$ is a palindrome. As for $t(i+1..i+k-1]$, it is a palindrome since is obtained from the palindrome $t(i..i+k]$ by erasing the first and the last letters. \end{proof}

\medskip
Let us say that a decomposition of $t(0..4n]$ to palindromes is a {\it $0$-decomposition} if all palindromes in it are of type $(0,0)$. The minimal number of palindromes in a 0-decomposition is denoted by $PPL^0_t(4n)$.
\begin{proposition}
 For every $n \geq 1$, we have $PPL_t(n)=PPL^0_t(4n)\geq PPL_t(4n)$.
\end{proposition}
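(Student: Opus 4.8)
The plan is to establish the chain $PPL_t(n)=PPL^0_t(4n)\geq PPL_t(4n)$ by using the morphism $\tau$ as a length-preserving bijection between palindromic decompositions of the prefix $t(0..n]$ and $0$-decompositions of the prefix $t(0..4n]$. The rightmost inequality is immediate: a $0$-decomposition is in particular a palindromic decomposition of $t(0..4n]$, so the optimal palindromic length of $t(0..4n]$ cannot exceed the optimal number of pieces in a $0$-decomposition, which gives $PPL^0_t(4n)\geq PPL_t(4n)$. The substance of the statement is therefore the equality $PPL_t(n)=PPL^0_t(4n)$.

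For the equality, the first fact I would record is that $\tau$ sends palindromes to palindromes of type $(0,0)$, and conversely reflects palindromicity. Since both $\tau(a)=abba$ and $\tau(b)=baab$ are palindromes, reversing $\tau(w)=\tau(w[1])\cdots\tau(w[m])$ reverses the order of the blocks and each block individually, yielding $\tau(w[m])\cdots\tau(w[1])=\tau(\widetilde w)$, where $\widetilde w$ denotes the reversal of $w$. Hence $\tau(w)$ is a palindrome if and only if $w=\widetilde w$, i.e.\ if and only if $w$ is a palindrome, once I know that $\tau$ is injective on words; this holds because $\tau$ is uniform of length $4$ with $\tau(a)\neq\tau(b)$, so any preimage has the same length and is decoded uniquely block by block. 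Moreover, because $\tau$ applied to the prefix of length $i$ produces the prefix of length $4i$, each image $\tau(w[j])$ occupies a block $t(4(j-1)..4j]$, so $\tau(w)$, read as a factor of $t$, begins and ends at multiples of $4$ and is thus of type $(0,0)$.

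With this in hand I would build the bijection in both directions. Given an optimal palindromic decomposition $t(0..n]=q_1\cdots q_k$ with $k=PPL_t(n)$, applying $\tau$ gives $t(0..4n]=\tau(q_1)\cdots\tau(q_k)$, a decomposition into $k$ palindromes of type $(0,0)$, so $PPL^0_t(4n)\leq PPL_t(n)$. Conversely, given a $0$-decomposition $t(0..4n]=P_1\cdots P_k$, every piece has type $(0,0)$, so all the boundaries sit at multiples of $4$, and by the self-similar block structure $t=\tau(t)$ each piece satisfies $P_j=t(4i_{j-1}..4i_j]=\tau\bigl(t(i_{j-1}..i_j]\bigr)$. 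By the previous paragraph, $P_j$ being a palindrome forces $t(i_{j-1}..i_j]$ to be a palindrome, so $t(0..n]=t(0..i_1]\cdots t(i_{k-1}..n]$ is a palindromic decomposition into $k$ parts and $PPL_t(n)\leq PPL^0_t(4n)$. The two inequalities together give $PPL_t(n)=PPL^0_t(4n)$.

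The step I expect to require the most care is the reverse direction of the bijection: checking that a type-$(0,0)$ palindromic factor of $t(0..4n]$ is not merely some concatenation of $\tau$-blocks but is genuinely the $\tau$-image of a \emph{single} palindrome. This rests on two points that must be stated cleanly: that type $(0,0)$ forces alignment with the canonical block decomposition of $t$ coming from $t=\tau(t)$, and that $\tau$ reflects palindromicity through its injectivity combined with the palindromicity of $\tau(a)$ and $\tau(b)$. Everything else is routine bookkeeping about how $\tau$ distributes over a concatenation.
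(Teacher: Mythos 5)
Your proposal is correct and is exactly the argument the paper intends: the paper's proof is the one-line observation that $\tau$ is a bijection between palindromic decompositions of $t(0..n]$ and $0$-decompositions of $t(0..4n]$, and you have simply filled in the details (palindromicity of $\tau(a),\tau(b)$, injectivity of $\tau$, and the block alignment forced by type $(0,0)$ via $t=\tau(t)$). No gap; your expansion of the reverse direction is the right care to take, but it is the same route.
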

\begin{proof} It is sufficient to note that $\tau$ is a bijection between all palindromic decompositions of $t(0..n]$ and 0-decompositions of $t(0..4n]$. \end{proof}

\begin{proposition}\label{p:+2+4}
 If \eqref{e:4n+2} holds for $n=N-1$, then 
 \begin{equation}\label{e:+2>+4}
PPL_t(4N-2)>PPL_t(4N).
\end{equation}
\end{proposition}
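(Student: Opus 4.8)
The plan is to derive \eqref{e:+2>+4} purely from three ingredients already in hand: the hypothesis that \eqref{e:4n+2} holds at $n=N-1$, the bound $PPL_t(4N)\leq PPL_t(N)$ furnished by the preceding proposition, and the one-step Lipschitz estimate of Lemma \ref{WRITEIT}. No new combinatorial analysis of palindromes is needed; the statement reduces to assembling these inequalities into a single chain.

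First I would rewrite the hypothesis. Applying \eqref{e:4n+2} with $n=N-1$ gives
\[PPL_t(4N-2)=\min\bigl(PPL_t(N-1),PPL_t(N)\bigr)+2.\]
The only term that could make the right-hand side small is $PPL_t(N-1)$, so next I would control it from below. By Lemma \ref{WRITEIT} applied at $n=N-1$ we have $PPL_t(N)\leq PPL_t(N-1)+1$, hence $PPL_t(N-1)\geq PPL_t(N)-1$. Since also $PPL_t(N)\geq PPL_t(N)-1$ trivially, both arguments of the minimum are at least $PPL_t(N)-1$, and therefore $PPL_t(4N-2)\geq PPL_t(N)+1$.

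Finally I would close the argument using the preceding proposition, which states $PPL_t(4N)\leq PPL_t(N)$. Chaining the two estimates yields
\[PPL_t(4N-2)\geq PPL_t(N)+1> PPL_t(N)\geq PPL_t(4N),\]
which is precisely \eqref{e:+2>+4}. I do not expect any genuine obstacle: the sole point requiring care is that the minimum in \eqref{e:4n+2} could a priori be one less than $PPL_t(N)$, but the Lipschitz bound of Lemma \ref{WRITEIT} prevents it from dropping below $PPL_t(N)-1$, so the extra $+2$ leaves a strict margin of at least one over $PPL_t(4N)$.
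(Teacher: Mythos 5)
Your proof is correct and follows essentially the same route as the paper: apply \eqref{e:4n+2} at $n=N-1$, use Lemma \ref{WRITEIT} to bound the minimum below by $PPL_t(N)-1$, and conclude via $PPL_t(4N)\leq PPL_t(N)$ from the preceding proposition. You merely spell out the inequality $PPL_t(N)\geq PPL_t(4N)$ that the paper's terser chain uses implicitly.
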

\begin{proof} The equality \eqref{e:4n+2} means that $PPL_t(4N-2)=\min(PPL_t(N-1),PPL_t(N))+2$, but since due to Lemma \ref{WRITEIT} we have $PPL_t(N)\leq PPL_t(N-1)+1$, we also have 
$\min(PPL_t(N-1),PPL_t(N))+2\geq  PPL_t(4N)+1$. \end{proof}

Now we can start the main proof of Theorem \ref{t:tm}.

The proof is done by induction on $n$. Clearly, $PPL_t(0)=0$, $PPL_t(1)=PPL_t(4)=1$, and $PPL_t(2)=PPL_t(3)=2$, the equalities \eqref{e:4n}--\eqref{e:4n+3} hold for $n=0$, and moreover, \eqref{e:4n} is true for $n=1$. Now suppose that they all, and, by Proposition \ref{p:+2+4}, the equality \eqref{e:+2>+4}, hold for all $n<N$, and  \eqref{e:4n} holds also for $n=N$. We fix an $N>0$ and prove for it the following sequence of propositions.

\begin{proposition}
 An optimal decomposition to palindromes of the prefix $t(0..4N+1]$ cannot end by a palindrome of length 3.
\end{proposition}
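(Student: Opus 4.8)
The plan is to argue by contradiction: I would assume that some optimal decomposition of the prefix $t(0..4N+1]$ ends with a palindrome of length $3$, and then show that this forces $PPL_t(4N+1)$ to exceed an easy upper bound that can be obtained independently of how the decomposition ends.

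First I would locate the offending palindrome precisely. Since the prefix ends at position $4N+1$, which has residue $1$ modulo $4$, a terminal palindrome of length $3$ must occupy the positions $4N-1,4N,4N+1$; that is, it equals $t(4N-2..4N+1]$, which by the remark following Proposition \ref{p:4-m} is one of $aba$, $bab$ and is of type $(2,1)$. The reason for pinning down its exact position is that deleting it leaves a palindromic decomposition of exactly $t(0..4N-2]$. Hence, if the original decomposition was optimal with $PPL_t(4N+1)$ palindromes, removing the last one gives
\[PPL_t(4N+1)\geq PPL_t(4N-2)+1.\]

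Next I would feed in the induction hypotheses to make this lower bound large. Since \eqref{e:4n+2} is assumed to hold for $n=N-1$, Proposition \ref{p:+2+4} supplies the strict inequality \eqref{e:+2>+4}, namely $PPL_t(4N-2)>PPL_t(4N)$; and since \eqref{e:4n} is assumed for $n=N$, we have $PPL_t(4N)=PPL_t(N)$. As these quantities are integers, the strict inequality upgrades to $PPL_t(4N-2)\geq PPL_t(N)+1$, and combining with the previous display yields $PPL_t(4N+1)\geq PPL_t(N)+2$.

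Finally I would close the argument with a competing upper bound that is insensitive to the shape of the last block. By Lemma \ref{WRITEIT} together with \eqref{e:4n} for $n=N$, one has $PPL_t(4N+1)\leq PPL_t(4N)+1=PPL_t(N)+1$, contradicting $PPL_t(4N+1)\geq PPL_t(N)+2$. This contradiction shows that no optimal decomposition can end with a length-$3$ palindrome. The step I expect to be the crux — and the main obstacle — is precisely the use of the \emph{strict} inequality \eqref{e:+2>+4}: had one only the weak bound $PPL_t(4N-2)\geq PPL_t(4N)$, the chain would give merely $PPL_t(4N+1)\geq PPL_t(N)+1$, which matches the upper bound and produces no contradiction. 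Everything therefore hinges on having already established \eqref{e:4n+2} for $n=N-1$, so that Proposition \ref{p:+2+4} is available.
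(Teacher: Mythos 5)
Your proof is correct and follows essentially the same route as the paper's: pin the terminal length-$3$ palindrome down as $t(4N-2..4N+1]$, invoke the inductively available strict inequality \eqref{e:+2>+4} from Proposition \ref{p:+2+4}, and contradict Lemma \ref{WRITEIT}. Your detour through $PPL_t(N)$ via \eqref{e:4n} is cosmetic --- the paper contradicts $PPL_t(4N+1)\leq PPL_t(4N)+1$ directly --- and your correct observation that the strict inequality is the crux matches the paper's reasoning exactly.
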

\begin{proof} Suppose the opposite: some optimal decomposition of $t(0..4N+1]$ ends by the palindrome $t(4N-2..4N+1]$. This palindrome is preceded by an optimal decomposition of $t(0..4N-2]$. So, $PPL_t(4N+1)=PPL_t(4N-2)+1$; but by \eqref{e:+2>+4} applied to $N-1$, which we can use by the induction hypothesis, $PPL_t(4N-2)>PPL_t(4N)$. So, $PPL_t(4N+1)>PPL_t(4N)+1$, contradicting to Lemma \ref{WRITEIT}. 
\end{proof}

\begin{proposition}
 There exists an optimal decomposition to palindromes of the prefix $t(0..4N+2]$ which does not end by a palindrome of length 3.
\end{proposition}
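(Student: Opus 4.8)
The plan is to split on whether \emph{any} optimal decomposition of $t(0..4N+2]$ ends in a length-$3$ palindrome. If none does, the statement is immediate, so the whole content lies in the opposite case. There I would first record that the only length-$3$ palindrome capable of ending at position $4N+2$ is $t(4N-1..4N+2]$, which has type $(3,2)$; this is consistent with the remark that length-$3$ palindromes in $t$ occur only with types $(2,1)$ and $(3,2)$. Thus ``the decomposition ends in a palindrome of length $3$'' means precisely that it ends in $t(4N-1..4N+2]$.

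Assume then that some optimal decomposition of $t(0..4N+2]$ ends in $t(4N-1..4N+2]$. Deleting this last palindrome leaves an optimal decomposition of $t(0..4N-1]$, so $PPL_t(4N+2)=PPL_t(4N-1)+1$. The key computation is now to evaluate $PPL_t(4N-1)$ through the induction hypothesis: applying \eqref{e:4n+3} at $n=N-1$ gives $PPL_t(4N-1)=PPL_t(4(N-1)+3)=PPL_t(N)+1$, whence $PPL_t(4N+2)=PPL_t(N)+2$.

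It then remains to produce an equally short decomposition with a different tail. I would take an optimal decomposition of $t(0..4N]$, which by \eqref{e:4n} at $n=N$ uses $PPL_t(4N)=PPL_t(N)$ palindromes, and simply append the two single letters $t[4N+1]$ and $t[4N+2]$ as two length-$1$ palindromes. This yields a decomposition of $t(0..4N+2]$ into $PPL_t(N)+2=PPL_t(4N+2)$ palindromes, so it is optimal, and it ends in a palindrome of length $1$. The argument has no real combinatorial obstacle: the two trailing letters can always be split off individually, irrespective of whether $t[4N+1]t[4N+2]$ is itself a palindrome, so the only care needed is to confirm that both induction hypotheses invoked, \eqref{e:4n+3} at $N-1$ and \eqref{e:4n} at $N$, fall within the already-established range. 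The reason the statement is only an existence claim — unlike the preceding proposition, which forbade a length-$3$ tail outright — is that here such a tail can genuinely be optimal (exactly when $PPL_t(N)\le PPL_t(N+1)$); the most one can secure is a parallel optimal decomposition, and the two-letter construction supplies it.
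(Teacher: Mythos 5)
Your proof is correct and takes essentially the same route as the paper: the paper likewise deduces $PPL_t(4N+2)=PPL_t(4N-1)+1=PPL_t(4N)+2$ via \eqref{e:4n+3} at $n=N-1$ and \eqref{e:4n} at $n=N$ (both within the established induction range), and then exhibits an optimal decomposition of $t(0..4N]$ followed by two length-$1$ palindromes, merely phrasing this as a contradiction to ``all optimal decompositions end in a length-$3$ palindrome'' where you use a direct case split. The only blemish is your closing parenthetical: a length-$3$ tail is optimal exactly when $t(4N-1..4N+2]$ is actually a palindrome \emph{and} $PPL_t(N)\leq PPL_t(N+1)$ (the length-$3$ suffix ending at a position $\equiv 2 \bmod 4$ can fail to be a palindrome, e.g., $bba$ at $n=10$), but this side remark plays no role in the argument.
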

\begin{proof} The opposite would mean that all optimal decompositions of $t(0..4N+2]$ end by the palindrome $t(4N-1..4N+2]$ preceded by an optimal decomposition of $t(0..4N-1]$. So, $PPL_t(4N+2)=PPL_t(4N-1)+1$; by the induction hypothesis, $PPL_t(4N-1)=PPL_t(4N)+1$. So, $PPL_t(4N+2)=PPL_t(4N)+2$, and thus another optimal decomposition of $t(0..4N+2]$ can be obtained as an optimal decomposition of $t(0..4N]$ followed by two palindromes of length 1. A contradiction.
\end{proof}

\begin{proposition}\label{p:min123}
 For every $m\in \{1,2,3\}$, the equality holds
\[PPL_t(4N+m)=\min(PPL_t(4N+m-1),PPL_t(4N+m+1))+1.\]
\end{proposition}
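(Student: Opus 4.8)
The inequality $\le$ is immediate from Lemma \ref{WRITEIT}: applying it to the pairs $(4N+m-1,4N+m)$ and $(4N+m,4N+m+1)$ gives $PPL_t(4N+m)\le PPL_t(4N+m-1)+1$ and $PPL_t(4N+m)\le PPL_t(4N+m+1)+1$, hence $PPL_t(4N+m)\le\min(PPL_t(4N+m-1),PPL_t(4N+m+1))+1$. So the whole content is the reverse inequality, namely that at least one neighbour of $4N+m$ has palindromic length $PPL_t(4N+m)-1$; put differently, that a position $\not\equiv 0\pmod 4$ is never a (weak) local minimum of $PPL_t$. The plan is to read this off from the last palindrome of a well-chosen optimal decomposition.

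Fix an optimal decomposition of $t(0..4N+m]$ into $k=PPL_t(4N+m)$ palindromes. Using the two preceding propositions for $m=1,2$, and for $m=3$ the observation that $aba$ and $bab$ are only of type $(2,1)$ or $(3,2)$ and therefore never end at a position of residue $3$, I may assume its last palindrome $P=t(j..4N+m]$ has length other than $3$. If $|P|=1$, the first $k-1$ palindromes decompose $t(0..4N+m-1]$, so $PPL_t(4N+m-1)\le k-1$ and we are done. Otherwise $P$ has even length, so by Proposition \ref{p:4-m} it is of type $(4-m,m)$; since $4-m\neq 0$ we have $j\equiv 4-m\not\equiv 0\pmod 4$ (in particular $j\ge 1$, so Proposition \ref{p:+-1} applies to $P$). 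Because appending $P$ to any decomposition of $t(0..j]$ decomposes $t(0..4N+m]$, optimality forces $PPL_t(j)=k-1$.

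The crux is that $j$ is itself a position $\not\equiv 0\pmod 4$, so the very statement being proved is available at $j$ and yields $\min(PPL_t(j-1),PPL_t(j+1))=PPL_t(j)-1=k-2$. If $PPL_t(j-1)=k-2$, I extend $P$ symmetrically by Proposition \ref{p:+-1} to the palindrome $t(j-1..4N+m+1]$ and prepend an optimal decomposition of $t(0..j-1]$; this decomposes $t(0..4N+m+1]$ into at most $(k-2)+1=k-1$ palindromes, so the right neighbour is smaller. If instead $PPL_t(j+1)=k-2$, I shrink $P$ to the palindrome $t(j+1..4N+m-1]$ (possibly the empty word, when $|P|=2$) and prepend an optimal decomposition of $t(0..j+1]$, decomposing $t(0..4N+m-1]$ into at most $k-1$ palindromes, so the left neighbour is smaller. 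In either case $\min(PPL_t(4N+m-1),PPL_t(4N+m+1))\le k-1$, which is exactly the bound we need.

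It remains to justify invoking the statement at $j$. When $j<4N$, write $j=4N'+r$ with $N'<N$ and $r\in\{1,2,3\}$: the required claim is this proposition at $N'$, which is equivalent to the recurrences \eqref{e:4n+1}--\eqref{e:4n+3} at $N'$ and hence part of the induction hypothesis. A short computation on residues shows that $j\le 4N-1$ when $m=1$ and $j\le 4N-2$ when $m=2$, so $j<4N$ in these cases. The one exceptional situation is $m=3$ with $|P|=2$, where $j=4N+1$ lies in the current block; this is why I intend to treat the three values in the order $m=1,2,3$, so that the case $m=1$ for the current $N$ (the statement at $4N+1$) is already available when handling $m=3$. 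Setting up this dependency correctly, and confirming that no other break point $j$ falls outside the range covered by the induction, is the main obstacle; once it is secured, the extend/shrink dichotomy furnished by Proposition \ref{p:+-1} closes every remaining case.
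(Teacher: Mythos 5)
Your proof is correct, but it takes a genuinely different route from the paper's. The paper also isolates a short palindrome, but a different one: it takes the \emph{last palindrome of length $1$ or $3$} anywhere in the optimal decomposition, say $p_j$ ending at $e_j$, argues that $p_j$ can be taken of length $1$, and then applies Proposition \ref{p:+-1} repeatedly to shift the entire tail $p_{j+1},\dots,p_k$ by one position (alternately extending and shrinking each palindrome), producing in one stroke a $(k-1)$-palindrome decomposition of $t(0..4N+m\mp 1]$ --- with no recursive appeal to the statement itself at interior positions. You instead touch only the final palindrome $P=t(j..4N+m]$ and, at its starting position $j\not\equiv 0\pmod 4$, invoke the proposition being proved as a strong induction hypothesis, then apply Proposition \ref{p:+-1} a single time. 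What your version buys is locality: one extend/shrink step instead of a propagating chain, and you sidestep the paper's somewhat delicate interior claim that ``$p_j$ can be taken of length $1$,'' which itself quietly appeals to the two preceding propositions at smaller lengths. The price is a heavier induction, and you managed its two genuine obligations correctly: the proposition at positions $j<4N$ does follow from the recurrences \eqref{e:4n+1}--\eqref{e:4n+3} at $N'<N$ (though ``equivalent'' overstates it slightly --- the needed direction is a one-line check with Lemma \ref{WRITEIT}, e.g.\ $\min(PPL_t(N'),PPL_t(N'+1))+2>PPL_t(N')$ collapses the minimum in the $m'=1$ case, and the $m'=3$ case uses \eqref{e:4n} at $N'+1\le N$, which is exactly covered by the paper's induction set-up), and your residue bounds correctly identify $j=4N+1$ in the case $m=3$, $|P|=2$ as the only in-block break point, handled by the ordering $m=1,2,3$. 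One cosmetic remark: in your case B with $|P|=2$ the premise $PPL_t(j+1)=k-2$ can never actually hold, since shrinking would give $PPL_t(4N+m-1)\le k-2$ and then Lemma \ref{WRITEIT} would force $PPL_t(4N+m)\le k-1$, contradicting optimality; carrying the vacuous branch is harmless, since the disjunction furnished by the induction still delivers $\min(PPL_t(4N+m-1),PPL_t(4N+m+1))\le k-1$ either way.
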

\begin{proof} Consider an optimal decomposition $t(0..4N+m]=p_1\cdots p_k$, where $k=PPL_t(4N+m)$. Denote the ends of palindromes as $0=e_0<e_1<\cdots < e_k=4N+m$, so that $p_i=t(e_{i-1},e_i]$ for each $i$. Since $m\neq 0$ and due to Proposition \ref{p:4-m},  there exist some palindromes of length 1 or 3 in this decomposition. Let $p_j$ be the last of them. 

Suppose first that $j=k$. Then due to the two previous propositions, $p_k$ can be taken of length 1 not 3, so that  $t(0..4N+m-1]=p_1\cdots p_{k-1}$ is decomposable to $k-1$ palindromes. Due to Lemma \ref{WRITEIT}, we have $PPL_t(4N+m-1)=k-1$, and thus $PPL_t(4N+m)=PPL_t(4N+m-1)+1$. Again due to Lemma \ref{WRITEIT}, we have $PPL_t(4N+m+1)\geq PPL_t(4N+m)-1=PPL_t(4N+m-1)$, and so the statement holds. 

Now suppose that $j<k$, so that  $e_j\equiv -e_{j+1} \equiv e_{j+2} \equiv \cdots \equiv (-1)^{k-j} e_k \bmod 4$. Here $p_j$ is the last palindrome in an optimal decomposition of $p_1\cdots p_j$ and it is of length 1 or 3. But  if $e_j\equiv 1$ or $2 \bmod 4$, $p_j$ can be taken of length 1 due to the two previous propositions applied to some smaller length; and if $e_j\equiv 3 \bmod 4$, it is of length 1 since the suffix of length $3$ of $t(0..4n+3])$ is equal to $abb$ or to $baa$, so, it is not a palindrome. So, anyway, we can take $p_j$ of length one: $p_j=t(e_{j}-1,e_j]$. 

Since $e_j\equiv \pm e_k$ and $e_k \equiv m \neq 0 \bmod 4$, we may apply Proposition \ref{p:+-1} and see that $p_j'=t(e_j-1..e_{j+1}+1]$ is a palindrome, as well as $p_{j+1}'=t(e_{j+1}+1..e_{j+2}-1]$ and so on up to $p_{k-1}'=t(e_{k-1}+(-1)^{k-j}..e_k-(-1)^{k-j}]$. So, $p_1 \cdots p_{j-1} p_j' \cdots p_{k-1}'$ is a decomposition of $t(0..4N+m-(-1)^{k-j}]$ to $k-1$ palindromes. So, as above, $PPL_t(4N+m)=PPL_t(4N+m-(-1)^{k-j})+1$, and since $PPL_t(4N+m+(-1)^{k-j})\geq PPL_t(4N+m)-1=PPL_t(4N+m-(-1)^{k-j})$, the proposition holds. \end{proof}

\medskip

\begin{proposition}\label{p:4n}
Every optimal palindromic decomposition of $t(0..4N+4]$ is a 0-decomposition, and thus $PPL_t(4N+4)=PPL_t(N+1)$.
\end{proposition}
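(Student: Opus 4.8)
The plan is to prove the sharper statement that a palindromic decomposition of $t(0..4N+4]$ is a $0$-decomposition \emph{precisely} when it contains no palindrome of length $1$ or $3$, and then to show that no optimal decomposition can use such a short palindrome. For the characterization, recall that in the Thue-Morse word every palindrome of length more than $3$ has even length, so a decomposition avoiding lengths $1$ and $3$ consists entirely of even-length palindromes, each of type $(m,4-m)$ by Proposition \ref{p:4-m}. Reading these types from $e_0=0\equiv 0 \bmod 4$ left to right, an immediate induction forces every endpoint to satisfy $e_i\equiv 0 \bmod 4$, so every palindrome is of type $(0,0)$ and the decomposition is a $0$-decomposition; the converse is trivial. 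Since we already know $PPL_t(4N+4)\leq PPL^0_t(4N+4)=PPL_t(N+1)$, it then remains only to exclude short palindromes from an optimal decomposition, after which the optimal count must equal $PPL^0_t(4N+4)=PPL_t(N+1)$.

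So I would take an optimal decomposition $t(0..4N+4]=p_1\cdots p_k$ with endpoints $0=e_0<\cdots<e_k=4N+4$, suppose it contains a palindrome of length $1$ or $3$, and let $p_j$ be the last such one. All palindromes after $p_j$ have even length, so running the type argument backwards from $e_k\equiv 0$ shows $e_j\equiv 0 \bmod 4$ (this covers $j=k$ trivially). A length-$3$ palindrome ending at a position $\equiv 0$ would have type $(1,0)$, impossible for $aba$ and $bab$ whose only types are $(2,1)$ and $(3,2)$; hence $p_j$ is a single letter and $e_{j-1}\equiv 3 \bmod 4$. I then split on whether $j<k$ or $j=k$.

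The case $j<k$ is quick: the prefix $p_1\cdots p_j$ optimally decomposes $t(0..4M]$ with $4M=e_j$ and $M\leq N$, so $PPL_t(4M)=j$, and deleting the final letter gives $PPL_t(4M-1)=j-1$. Writing $4M-1=4(M-1)+3$ and invoking the induction hypotheses \eqref{e:4n+3} for $n=M-1$ and \eqref{e:4n} for $n=M$ yields $PPL_t(M)=j-2$ and $PPL_t(M)=j$ simultaneously, a contradiction. The hard part will be the case $j=k$, precisely because \eqref{e:4n+2} and \eqref{e:4n+3} are not yet available at $n=N$. There $p_k$ is the final letter, so $PPL_t(4N+4)=PPL_t(4N+3)+1$; substituting this into the three $\min$-identities of Proposition \ref{p:min123} for $4N+3,4N+2,4N+1$ forces, one after another, $PPL_t(4N+4-i)=PPL_t(4N+4)-i$ for $i=1,2,3,4$. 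Applying \eqref{e:4n} at $n=N$ then gives $PPL_t(N)=PPL_t(4N+4)-4\leq PPL_t(N+1)-4$, contradicting $PPL_t(N+1)\leq PPL_t(N)+1$ from Lemma \ref{WRITEIT}. With both cases excluded, every optimal decomposition is a $0$-decomposition, and therefore $PPL_t(4N+4)=PPL_t(N+1)$.
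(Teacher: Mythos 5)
Your proof is correct, and its skeleton matches the paper's: locate the last palindrome of length $1$ or $3$, propagate the types $(m,4-m)$ of Proposition \ref{p:4-m} backwards from $e_k\equiv 0$ to force $e_j\equiv 0 \bmod 4$ and hence (excluding length $3$) length $1$, and, when $j=k$, run exactly the paper's cascade through Proposition \ref{p:min123} to get $PPL_t(4N)=PPL_t(4N+4)-4$, contradicting $PPL_t(4N+4)\leq PPL^0_t(4N+4)=PPL_t(N+1)\leq PPL_t(N)+1$. Where you genuinely diverge is the case $j<k$. The paper disposes of it by invoking Proposition \ref{p:4n} itself as an induction hypothesis: $p_1\cdots p_j$ is an optimal decomposition of $t(0..4M]$ with $M\leq N$, so by the proposition at the smaller index it is a $0$-decomposition, directly contradicting the type of $p_j$. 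You instead derive a numerical contradiction from the already-established recurrences: $PPL_t(4M)=j$ and $PPL_t(4M-1)\leq j-1$ by prefix optimality and deletion, while \eqref{e:4n+3} at $n=M-1$ and \eqref{e:4n} at $n=M$ (both legitimately available, since $M-1<N$ and \eqref{e:4n} is assumed up to $n=N$) force $PPL_t(4M-1)=PPL_t(4M)+1$. This buys you a proof that never cites the proposition's own earlier instances, only the theorem's standing induction hypotheses, which makes the induction bookkeeping slightly more transparent; the paper's self-referential version is a bit shorter. Two further cosmetic differences: your preliminary equivalence (a decomposition of $t(0..4N+4]$ is a $0$-decomposition iff it contains no palindrome of length $1$ or $3$) is an explicit packaging of what the paper uses implicitly through types, and you exclude a final length-$3$ palindrome via the admissible types $(2,1)$ and $(3,2)$ of $aba$ and $bab$ where the paper observes that the suffix $bba$ or $aab$ is not a palindrome; both arguments are sound.
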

\begin{proof} Suppose the opposite; then the last palindrome in the optimal decomposition which is not of type (0,0) is of type $(m,0)$ and thus is of length 1 not 3. Since the proof of Theorem \ref{t:tm} proceeds by induction on $N$, this proposition is true for all $n<N$, and thus the palindrome of type $(m,0)$ is the very last palindrome of the optimal decomposition. Since the suffix of length $3$ of $t(0..4N+4]$ is equal to $bba$ or $aab$ and thus is not a palindrome,  the last palindrome of the optimal decomposition is of length $1$, meaning that $PPL_t(4N+4)=PPL_t(4N+3)+1$. Now let us use Proposition \ref{p:min123} applied to $m=3,2,1$; every time we get $PPL_t(4N+m)=PPL_t(4N+m-1)+1$. Summing up these inequalities, we get $PPL_t(4N+4)=PPL_t(4N)+4$, which is impossible since $PPL_t(4N)=PPL_t(N)$ and $PPL_t(4N+4)\leq PPL_t(N+1)\leq PPL_t(N)+1$. A contradiction. \end{proof}

\medskip
We have proven 
 \eqref{e:4n} for $n=N+1$. It remains to prove \eqref{e:4n+1}--\eqref{e:4n+3} for $n=N$. Indeed, we know that
\begin{equation}\label{e:-1+1}
-1 \leq PPL_t(4N+4)-PPL_t(4N)=PPL_t(N+1)-PPL_t(N)\leq 1.
\end{equation}
Now to prove \eqref{e:4n+1} suppose by contrary that $PPL_t(4N+1)\leq PPL_t(4N)=PPL_t(N)$. Due to Proposition \ref{p:min123}, this means that $PPL_t(4N+1)=PPL_t(4N+2)+1$, that is, $PPL_t(4N+2)<PPL_t(4N)$, and, again by Proposition \ref{p:min123}, $PPL_t(N+1)=PPL_t(4N+2)-2$. Thus, $PPL_t(N)-PPL_t(N+1)\geq 3$, a contradiction to \eqref{e:-1+1}.
So, \eqref{e:4n+1} is proven. 

The equality \eqref{e:4n+3} is proven symmetrically. Now \eqref{e:4n+2} follows from both these equalities in combination with Proposition \ref{p:min123}, completing the proof of Theorem \ref{t:tm}.

\section{Corollaries}

The first differences $(d_t(n))_{n=0}^{\infty}$ of the prefix palindromic length are defined as  $d_t(n)=PPL_t(n+1)-PPL_t(n)$; here we set $PPL_t(0)=0$. Due to Lemma \ref{WRITEIT}, $d_t(n)\in \{-1,0,+1\}$ for every $n$; so, it is a sequence on a finite alphabet which we prefer to denote $\{
 \begin{tikzpicture}[baseline=1pt,scale=0.3]
    \draw[thick] (0,1) -- (1,0);
 \node[circle,fill=black,inner sep=0pt,minimum size=4pt] (a) at (0, 1) {};
 \node[circle,fill=black,inner sep=0pt,minimum size=4pt] (a) at (1, 0) {};
  \end{tikzpicture},
\begin{tikzpicture}[baseline=-3pt,scale=0.3]
    \draw[thick] (0,0) -- (1,0);
 \node[circle,fill=black,inner sep=0pt,minimum size=4pt] (a) at (0, 0) {};
 \node[circle,fill=black,inner sep=0pt,minimum size=4pt] (a) at (1, 0) {};
  \end{tikzpicture},
\begin{tikzpicture}[baseline=1pt,scale=0.3]
     \draw[thick] (0,0) -- (1,1);
 \node[circle,fill=black,inner sep=0pt,minimum size=4pt] (a) at (0, 0) {};
 \node[circle,fill=black,inner sep=0pt,minimum size=4pt] (a) at (1, 1) {};
  \end{tikzpicture}
   \}$. We write these symbols joining the ends of intervals from left to right, so that the sequence $(d_t(n))$ becomes the plot of $PPL_t(n)$. 

The following corollary of Theorem \ref{t:tm} is more or less straightforward.
\begin{corollary}
 The sequence $(d_t(n))$ is the fixed point of the morphism
\[\delta: \begin{cases}
           
\begin{tikzpicture}[baseline=1pt,scale=0.3]
     \draw[thick] (0,0) -- (1,1);
 \node[circle,fill=black,inner sep=0pt,minimum size=4pt] (a) at (0, 0) {};
 \node[circle,fill=black,inner sep=0pt,minimum size=4pt] (a) at (1, 1) {};
  \end{tikzpicture}
  
  \to 
  
   \begin{tikzpicture}[baseline=1pt,scale=0.3]
     \draw[thick] (0,0) -- (1,1)-- (2,2)--(3,2)--(4,1);
 \node[circle,fill=black,inner sep=0pt,minimum size=4pt] (a) at (0, 0) {};
 \node[circle,fill=black,inner sep=0pt,minimum size=4pt] (a) at (1, 1) {};
 \node[circle,fill=black,inner sep=0pt,minimum size=4pt] (a) at (2, 2) {};
 \node[circle,fill=black,inner sep=0pt,minimum size=4pt] (a) at (3, 2) {};
 \node[circle,fill=black,inner sep=0pt,minimum size=4pt] (a) at (4, 1) {};
 
  \end{tikzpicture}
\\

\begin{tikzpicture}[baseline=1pt,scale=0.3]
    \draw[thick] (0,0) -- (1,0);
 \node[circle,fill=black,inner sep=0pt,minimum size=4pt] (a) at (0, 0) {};
 \node[circle,fill=black,inner sep=0pt,minimum size=4pt] (a) at (1, 0) {};
  \end{tikzpicture}
  
  \to 
  
   \begin{tikzpicture}[baseline=2pt,scale=0.3]
     \draw[thick] (0,0) -- (1,1)-- (2,2)--(3,1)--(4,0);
 \node[circle,fill=black,inner sep=0pt,minimum size=4pt] (a) at (0, 0) {};
 \node[circle,fill=black,inner sep=0pt,minimum size=4pt] (a) at (1, 1) {};
 \node[circle,fill=black,inner sep=0pt,minimum size=4pt] (a) at (2, 2) {};
 \node[circle,fill=black,inner sep=0pt,minimum size=4pt] (a) at (3, 1) {};
 \node[circle,fill=black,inner sep=0pt,minimum size=4pt] (a) at (4, 0) {};
   \end{tikzpicture}\\
   
 \begin{tikzpicture}[baseline=1pt,scale=0.3]
    \draw[thick] (0,1) -- (1,0);
 \node[circle,fill=black,inner sep=0pt,minimum size=4pt] (a) at (0, 1) {};
 \node[circle,fill=black,inner sep=0pt,minimum size=4pt] (a) at (1, 0) {};
  \end{tikzpicture}
  
  \to 
  
   \begin{tikzpicture}[baseline=1pt,scale=0.3]
     \draw[thick] (0,1) -- (1,2)-- (2,2)--(3,1)--(4,0);
 \node[circle,fill=black,inner sep=0pt,minimum size=4pt] (a) at (0, 1) {};
 \node[circle,fill=black,inner sep=0pt,minimum size=4pt] (a) at (1, 2) {};
 \node[circle,fill=black,inner sep=0pt,minimum size=4pt] (a) at (2, 2) {};
 \node[circle,fill=black,inner sep=0pt,minimum size=4pt] (a) at (3, 1) {};
 \node[circle,fill=black,inner sep=0pt,minimum size=4pt] (a) at (4, 0) {};
   \end{tikzpicture}  
          \end{cases}\]

\end{corollary}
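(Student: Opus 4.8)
The plan is to express each block of four consecutive differences $d_t(4n),d_t(4n+1),d_t(4n+2),d_t(4n+3)$ as a function of the single value $d_t(n)$, and then to check that this function is exactly the morphism $\delta$.

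First I would read off the two outer differences directly from Theorem~\ref{t:tm}. Combining \eqref{e:4n} and \eqref{e:4n+1} gives $d_t(4n)=PPL_t(4n+1)-PPL_t(4n)=(PPL_t(n)+1)-PPL_t(n)=+1$, and combining \eqref{e:4n+3} with \eqref{e:4n} applied to $n+1$ gives $d_t(4n+3)=PPL_t(4n+4)-PPL_t(4n+3)=PPL_t(n+1)-(PPL_t(n+1)+1)=-1$. Thus every block opens with a rise and closes with a fall, which is consistent with all three images of $\delta$, each of which starts with $+1$ and ends with $-1$.

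The heart of the argument is the two inner differences, which are the only places where the value $d_t(n)$ enters. Writing $a=PPL_t(n)$ and $b=PPL_t(n+1)$, equations \eqref{e:4n+1}--\eqref{e:4n+3} yield $d_t(4n+1)=\min(a,b)-a+1$ and $d_t(4n+2)=b-\min(a,b)-1$. I would then unfold the minimum according to the sign of $d_t(n)=b-a$, which by Lemma~\ref{WRITEIT} lies in $\{-1,0,+1\}$: the quantity $\min(a,b)-a$ equals $0$ when $d_t(n)\geq 0$ and equals $-1$ when $d_t(n)=-1$, while $b-\min(a,b)$ equals $d_t(n)$ when $d_t(n)\geq 0$ and equals $0$ otherwise. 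Substituting the three possible values $d_t(n)\in\{+1,0,-1\}$ produces the inner pair $(+1,0)$, $(+1,-1)$, $(0,-1)$ respectively. Together with the outer values this reconstructs precisely $\delta(+1)=(+1,+1,0,-1)$, $\delta(0)=(+1,+1,-1,-1)$ and $\delta(-1)=(+1,0,-1,-1)$, so that $(d_t(4n+j))_{j=0}^{3}=\delta(d_t(n))$ for every $n\geq 0$.

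Finally, since $(d_t(m))_{m\geq 0}$ is the concatenation of these length-four blocks, the block identity says exactly that $(d_t(m))_{m\geq 0}=\delta\bigl((d_t(n))_{n\geq 0}\bigr)$, i.e.\ the sequence is $\delta$-invariant. As $d_t(0)=PPL_t(1)-PPL_t(0)=+1$ and $\delta(+1)$ begins with the letter $+1$, the morphism $\delta$ is prolongable on $+1$ and admits a unique fixed point beginning with $+1$; the invariance then identifies that fixed point with $(d_t(n))$. I expect the only delicate step to be the short case analysis unfolding $\min(a,b)$ in the two inner differences; the remaining equalities are direct substitutions into Theorem~\ref{t:tm}.
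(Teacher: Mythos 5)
Your proposal is correct and follows essentially the same route as the paper's own proof: both derive the block $(d_t(4n),\ldots,d_t(4n+3))$ from Theorem~\ref{t:tm}, reading the initial $+1$ from \eqref{e:4n+1}, the final $-1$ from \eqref{e:4n+3}, and the two middle symbols from \eqref{e:4n+2} as a function of $d_t(n)$ alone. Your version merely makes explicit the case analysis on $\min(a,b)$ and the prolongability argument that the paper leaves implicit, and all three computed blocks match $\delta$ exactly.
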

{\sc Proof.} Theorem \ref{t:tm} immediately means that $PPL_t(4n),\ldots,PPL_t(4n+4)$ are determined by $PPL_t(n)$ and $PPL_t(n+1)$, and moreover, $d_t(4n)$, $\ldots$, $d_t(4n+3)$ are determined by $d_t(n)$. This means exactly that the sequence $d_t(n)$ is a fixed point of a morphism of length 4. The equality \eqref{e:4n+1} means that the first symbol of any morphic image of $\delta$ is $\begin{tikzpicture}[baseline=1pt,scale=0.3]
     \draw[thick] (0,0) -- (1,1);
 \node[circle,fill=black,inner sep=0pt,minimum size=4pt] (a) at (0, 0) {};
 \node[circle,fill=black,inner sep=0pt,minimum size=4pt] (a) at (1, 1) {};
  \end{tikzpicture}$; the equality \eqref{e:4n+3} means that the last symbol of any morphic image of $\delta$ is $ \begin{tikzpicture}[baseline=1pt,scale=0.3]
    \draw[thick] (0,1) -- (1,0);
 \node[circle,fill=black,inner sep=0pt,minimum size=4pt] (a) at (0, 1) {};
 \node[circle,fill=black,inner sep=0pt,minimum size=4pt] (a) at (1, 0) {};
  \end{tikzpicture} $; the two symbols in the middle can be found from \eqref{e:4n+2} and depend on $d_t(n)$. \hfill $\Box$

  \medskip
With the previous corollary, we can  draw the plot of $PPL_t(n)$ as the fixed point of $\delta$.

\begin{figure}[h]
\centering
   \begin{tikzpicture}[baseline=1pt,scale=0.1]
     \draw [<->] (0,7) node (yaxis) [above] {$PPL_t(n)$}
        |- (105,0) node (xaxis) [right] {$n$};
    \foreach \x/\xtext in {4/4, 16/16, 64/64}
    \draw[shift={(\x,0)}] (0pt,5pt) -- (0pt,-5pt) node[below] {\tiny $\xtext$};
     \draw[thick] (0,0)--(1,1)--(2,2)--(3,2)--(4,1)--(5,2)--(6,3)--(7,3)--(8,2)--(9,3)--(10,4)--(11,3)--(12,2)--(13,3)--(14,3)--(15,2)--(16,1)--(17,2)--(18,3)--(19,3)--(20,2)--(21,3)--(22,4)--(23,4)--(24,3)--(25,4)--(26,5)--(27,4)--(28,3)--(29,4)--(30,4)--(31,3)--(32,2)--(33,3)--(34,4)--(35,4)--(36,3)--(37,4)--(38,5)--(39,5)--(40,4)--(41,5)--(42,5)--(43,4)--(44,3)--(45,4)--(46,4)--(47,3)--(48,2)--(49,3)--(50,4)--(51,4)--(52,3)--(53,4)--(54,5)--(55,4)--(56,3)--(57,4)--(58,4)--(59,3)--(60,2)--(61,3)--(62,3)--(63,2)--(64,1)--(65,2)--(66,3)--(67,3)--(68,2)--(69,3)--(70,4)--(71,4)--(72,3)--(73,4)--(74,5)--(75,4)--(76,3)--(77,4)--(78,4)--(79,3)--(80,2)--(81,3)--(82,4)--(83,4)--(84,3)--(85,4)--(86,5)--(87,5)--(88,4)--(89,5)--(90,6)--(91,5)--(92,4)--(93,5)--(94,5)--(95,4)--(96,3)--(97,4)--(98,5)--(99,5)--(100,4);
 \node[circle,fill=black,inner sep=0pt,minimum size=2pt] (a) at (0,0) {};
\node[circle,fill=black,inner sep=0pt,minimum size=2pt] (a) at (1,1) {};
\node[circle,fill=black,inner sep=0pt,minimum size=2pt] (a) at (2,2) {};
\node[circle,fill=black,inner sep=0pt,minimum size=2pt] (a) at (3,2) {};
\node[circle,fill=black,inner sep=0pt,minimum size=2pt] (a) at (4,1) {};
\node[circle,fill=black,inner sep=0pt,minimum size=2pt] (a) at (5,2) {};
\node[circle,fill=black,inner sep=0pt,minimum size=2pt] (a) at (6,3) {};
\node[circle,fill=black,inner sep=0pt,minimum size=2pt] (a) at (7,3) {};
\node[circle,fill=black,inner sep=0pt,minimum size=2pt] (a) at (8,2) {};
\node[circle,fill=black,inner sep=0pt,minimum size=2pt] (a) at (9,3) {};
\node[circle,fill=black,inner sep=0pt,minimum size=2pt] (a) at (10,4) {};
\node[circle,fill=black,inner sep=0pt,minimum size=2pt] (a) at (11,3) {};
\node[circle,fill=black,inner sep=0pt,minimum size=2pt] (a) at (12,2) {};
\node[circle,fill=black,inner sep=0pt,minimum size=2pt] (a) at (13,3) {};
\node[circle,fill=black,inner sep=0pt,minimum size=2pt] (a) at (14,3) {};
\node[circle,fill=black,inner sep=0pt,minimum size=2pt] (a) at (15,2) {};
\node[circle,fill=black,inner sep=0pt,minimum size=2pt] (a) at (16,1) {};
\node[circle,fill=black,inner sep=0pt,minimum size=2pt] (a) at (17,2) {};
\node[circle,fill=black,inner sep=0pt,minimum size=2pt] (a) at (18,3) {};
\node[circle,fill=black,inner sep=0pt,minimum size=2pt] (a) at (19,3) {};
\node[circle,fill=black,inner sep=0pt,minimum size=2pt] (a) at (20,2) {};
\node[circle,fill=black,inner sep=0pt,minimum size=2pt] (a) at (21,3) {};
\node[circle,fill=black,inner sep=0pt,minimum size=2pt] (a) at (22,4) {};
\node[circle,fill=black,inner sep=0pt,minimum size=2pt] (a) at (23,4) {};
\node[circle,fill=black,inner sep=0pt,minimum size=2pt] (a) at (24,3) {};
\node[circle,fill=black,inner sep=0pt,minimum size=2pt] (a) at (25,4) {};
\node[circle,fill=black,inner sep=0pt,minimum size=2pt] (a) at (26,5) {};
\node[circle,fill=black,inner sep=0pt,minimum size=2pt] (a) at (27,4) {};
\node[circle,fill=black,inner sep=0pt,minimum size=2pt] (a) at (28,3) {};
\node[circle,fill=black,inner sep=0pt,minimum size=2pt] (a) at (29,4) {};
\node[circle,fill=black,inner sep=0pt,minimum size=2pt] (a) at (30,4) {};
\node[circle,fill=black,inner sep=0pt,minimum size=2pt] (a) at (31,3) {};
\node[circle,fill=black,inner sep=0pt,minimum size=2pt] (a) at (32,2) {};
\node[circle,fill=black,inner sep=0pt,minimum size=2pt] (a) at (33,3) {};
\node[circle,fill=black,inner sep=0pt,minimum size=2pt] (a) at (34,4) {};
\node[circle,fill=black,inner sep=0pt,minimum size=2pt] (a) at (35,4) {};
\node[circle,fill=black,inner sep=0pt,minimum size=2pt] (a) at (36,3) {};
\node[circle,fill=black,inner sep=0pt,minimum size=2pt] (a) at (37,4) {};
\node[circle,fill=black,inner sep=0pt,minimum size=2pt] (a) at (38,5) {};
\node[circle,fill=black,inner sep=0pt,minimum size=2pt] (a) at (39,5) {};
\node[circle,fill=black,inner sep=0pt,minimum size=2pt] (a) at (40,4) {};
\node[circle,fill=black,inner sep=0pt,minimum size=2pt] (a) at (41,5) {};
\node[circle,fill=black,inner sep=0pt,minimum size=2pt] (a) at (42,5) {};
\node[circle,fill=black,inner sep=0pt,minimum size=2pt] (a) at (43,4) {};
\node[circle,fill=black,inner sep=0pt,minimum size=2pt] (a) at (44,3) {};
\node[circle,fill=black,inner sep=0pt,minimum size=2pt] (a) at (45,4) {};
\node[circle,fill=black,inner sep=0pt,minimum size=2pt] (a) at (46,4) {};
\node[circle,fill=black,inner sep=0pt,minimum size=2pt] (a) at (47,3) {};
\node[circle,fill=black,inner sep=0pt,minimum size=2pt] (a) at (48,2) {};
\node[circle,fill=black,inner sep=0pt,minimum size=2pt] (a) at (49,3) {};
\node[circle,fill=black,inner sep=0pt,minimum size=2pt] (a) at (50,4) {};
\node[circle,fill=black,inner sep=0pt,minimum size=2pt] (a) at (51,4) {};
\node[circle,fill=black,inner sep=0pt,minimum size=2pt] (a) at (52,3) {};
\node[circle,fill=black,inner sep=0pt,minimum size=2pt] (a) at (53,4) {};
\node[circle,fill=black,inner sep=0pt,minimum size=2pt] (a) at (54,5) {};
\node[circle,fill=black,inner sep=0pt,minimum size=2pt] (a) at (55,4) {};
\node[circle,fill=black,inner sep=0pt,minimum size=2pt] (a) at (56,3) {};
\node[circle,fill=black,inner sep=0pt,minimum size=2pt] (a) at (57,4) {};
\node[circle,fill=black,inner sep=0pt,minimum size=2pt] (a) at (58,4) {};
\node[circle,fill=black,inner sep=0pt,minimum size=2pt] (a) at (59,3) {};
\node[circle,fill=black,inner sep=0pt,minimum size=2pt] (a) at (60,2) {};
\node[circle,fill=black,inner sep=0pt,minimum size=2pt] (a) at (61,3) {};
\node[circle,fill=black,inner sep=0pt,minimum size=2pt] (a) at (62,3) {};
\node[circle,fill=black,inner sep=0pt,minimum size=2pt] (a) at (63,2) {};
\node[circle,fill=black,inner sep=0pt,minimum size=2pt] (a) at (64,1) {};
\node[circle,fill=black,inner sep=0pt,minimum size=2pt] (a) at (65,2) {};
\node[circle,fill=black,inner sep=0pt,minimum size=2pt] (a) at (66,3) {};
\node[circle,fill=black,inner sep=0pt,minimum size=2pt] (a) at (67,3) {};
\node[circle,fill=black,inner sep=0pt,minimum size=2pt] (a) at (68,2) {};
\node[circle,fill=black,inner sep=0pt,minimum size=2pt] (a) at (69,3) {};
\node[circle,fill=black,inner sep=0pt,minimum size=2pt] (a) at (70,4) {};
\node[circle,fill=black,inner sep=0pt,minimum size=2pt] (a) at (71,4) {};
\node[circle,fill=black,inner sep=0pt,minimum size=2pt] (a) at (72,3) {};
\node[circle,fill=black,inner sep=0pt,minimum size=2pt] (a) at (73,4) {};
\node[circle,fill=black,inner sep=0pt,minimum size=2pt] (a) at (74,5) {};
\node[circle,fill=black,inner sep=0pt,minimum size=2pt] (a) at (75,4) {};
\node[circle,fill=black,inner sep=0pt,minimum size=2pt] (a) at (76,3) {};
\node[circle,fill=black,inner sep=0pt,minimum size=2pt] (a) at (77,4) {};
\node[circle,fill=black,inner sep=0pt,minimum size=2pt] (a) at (78,4) {};
\node[circle,fill=black,inner sep=0pt,minimum size=2pt] (a) at (79,3) {};
\node[circle,fill=black,inner sep=0pt,minimum size=2pt] (a) at (80,2) {};
\node[circle,fill=black,inner sep=0pt,minimum size=2pt] (a) at (81,3) {};
\node[circle,fill=black,inner sep=0pt,minimum size=2pt] (a) at (82,4) {};
\node[circle,fill=black,inner sep=0pt,minimum size=2pt] (a) at (83,4) {};
\node[circle,fill=black,inner sep=0pt,minimum size=2pt] (a) at (84,3) {};
\node[circle,fill=black,inner sep=0pt,minimum size=2pt] (a) at (85,4) {};
\node[circle,fill=black,inner sep=0pt,minimum size=2pt] (a) at (86,5) {};
\node[circle,fill=black,inner sep=0pt,minimum size=2pt] (a) at (87,5) {};
\node[circle,fill=black,inner sep=0pt,minimum size=2pt] (a) at (88,4) {};
\node[circle,fill=black,inner sep=0pt,minimum size=2pt] (a) at (89,5) {};
\node[circle,fill=black,inner sep=0pt,minimum size=2pt] (a) at (90,6) {};
\node[circle,fill=black,inner sep=0pt,minimum size=2pt] (a) at (91,5) {};
\node[circle,fill=black,inner sep=0pt,minimum size=2pt] (a) at (92,4) {};
\node[circle,fill=black,inner sep=0pt,minimum size=2pt] (a) at (93,5) {};
\node[circle,fill=black,inner sep=0pt,minimum size=2pt] (a) at (94,5) {};
\node[circle,fill=black,inner sep=0pt,minimum size=2pt] (a) at (95,4) {};
\node[circle,fill=black,inner sep=0pt,minimum size=2pt] (a) at (96,3) {};
\node[circle,fill=black,inner sep=0pt,minimum size=2pt] (a) at (97,4) {};
\node[circle,fill=black,inner sep=0pt,minimum size=2pt] (a) at (98,5) {};
\node[circle,fill=black,inner sep=0pt,minimum size=2pt] (a) at (99,5) {};

   \end{tikzpicture}

\caption{$PPL_t(n)$}\label{f:plot}
\end{figure}

The next proposition can be obtained from Theorem \ref{t:tm} by elementary computations. Recall that $SP(k)=SP_t(k)$ is the length $n$ of the shortest prefix of $t$ such that its palindromic length $PPL_t(n)$ is equal to $k$.

\begin{proposition}\label{p:sp}
We have $SP_t(1)=1$, $SP_t(2)=2$, $SP_t(3)=6$ and for all $k>0$,
\[SP_t(k+3)=16 SP_t(k)-6.\]
\end{proposition}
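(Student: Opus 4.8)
The base values $SP_t(1)=1$, $SP_t(2)=2$, $SP_t(3)=6$ can be read directly off the table of $PPL_t$, so the plan is to establish the recurrence $SP_t(k+3)=16\,SP_t(k)-6$ for all $k\geq 1$. The starting point is to rewrite $SP_t(k)=\min\{n:PPL_t(n)\geq k\}$: since by Lemma \ref{WRITEIT} the function $PPL_t$ changes by at most $1$ at each step and $PPL_t(0)=0$, the first time it reaches a value $\geq k$ it equals $k$ exactly, so the two descriptions agree; moreover $PPL_t(SP_t(k))=k$ and $PPL_t(SP_t(k)-1)=k-1$ for $k\geq 1$. I would then introduce the auxiliary quantity $g(k)=\min\{q:PPL_t(q)\geq k \text{ and } PPL_t(q+1)\geq k\}$, the first position at which two consecutive prefixes both have palindromic length at least $k$; this is exactly the object governed by the ``$\min$'' in rule \eqref{e:4n+2}.

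The heart of the argument is the identity $g(k+1)=4\,SP_t(k)-2$ for $k\geq 1$. To prove it I would first check that $q_0=4\,SP_t(k)-2=4(SP_t(k)-1)+2$ works: applying \eqref{e:4n+2} with $n=SP_t(k)-1$ gives $PPL_t(q_0)=\min(k-1,k)+2=k+1$, and applying \eqref{e:4n+3} with the same $n$ gives $PPL_t(q_0+1)=k+1$, so both neighbours reach $k+1$. For minimality I would take any $q$ with $PPL_t(q),PPL_t(q+1)\geq k+1$, write $q=4m+r$, and use Theorem \ref{t:tm} to read off the relevant value in each residue class $r\in\{0,1,2,3\}$: in every case the hypothesis forces $PPL_t(m)\geq k$ or $PPL_t(m+1)\geq k$, hence $m\geq SP_t(k)$ or $m\geq SP_t(k)-1$, and a one-line computation then yields $q\geq q_0$. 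This four-case analysis, where the exact shape of the five-point block $a,a+1,\min(a,b)+2,b+1,b$ must be exploited, is the step I expect to be the main obstacle; everything else is bookkeeping.

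With $g$ understood, I would run the same residue analysis on $SP_t(k+3)=\min\{n:PPL_t(n)\geq k+3\}$ itself. Writing the minimal such $n$ as $4m+r$ and discarding $r=0$ (which would force $PPL_t(m)\geq k+3$, i.e.\ $m\geq SP_t(k+3)=4m$, impossible), the three remaining classes give candidate positions $4\,SP_t(k+2)+1$, $4\,g(k+1)+2$ and $4\,SP_t(k+2)-1$ via \eqref{e:4n+1}, \eqref{e:4n+2}, \eqref{e:4n+3} respectively. The first is dominated by the third, so $SP_t(k+3)=\min\bigl(4\,g(k+1)+2,\;4\,SP_t(k+2)-1\bigr)$, and it remains to show the first term is smaller, i.e.\ that $g(k+1)<SP_t(k+2)$. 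For this I would prove the crude bound $SP_t(k+2)>4\,SP_t(k)-2$ directly: any $n$ with $PPL_t(n)\geq k+2$, written as $4m+r$, satisfies $PPL_t(m)\geq k$ or $PPL_t(m+1)\geq k$ in each residue class, whence $n>4\,SP_t(k)-2=g(k+1)$.

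Finally, combining $SP_t(k+3)=4\,g(k+1)+2$ with $g(k+1)=4\,SP_t(k)-2$ gives $SP_t(k+3)=4(4\,SP_t(k)-2)+2=16\,SP_t(k)-6$, as required, with the three base values seeding the recurrence. Note that no genuine induction is needed beyond Theorem \ref{t:tm}; the one point worth verifying separately is the finiteness of every $SP_t(k)$ (so that the minima are attained), which follows from the unboundedness of $PPL_t$.
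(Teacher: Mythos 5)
Your proof is correct; I checked the two-sided argument for $g(k+1)=4\,SP_t(k)-2$ and the residue analysis for $SP_t(k+3)$ case by case, and every step goes through (in the minimality step, residue $r=2$ indeed requires the hypothesis at $q+1$, via $PPL_t(4m+3)=PPL_t(m+1)+1\geq k+1$, to get $m\geq SP_t(k)-1$ --- exactly the five-point-block subtlety you flag). Structurally you rediscover the paper's skeleton: your $g(k)$ is the paper's auxiliary quantity $SP_2(k)$, defined there with equalities $PPL_t(n)=PPL_t(n+1)=k$ rather than inequalities (the two notions coincide a posteriori, and your $\geq$-version is arguably cleaner, since the residue-$2$ condition $\min(PPL_t(m),PPL_t(m+1))\geq k+1$ is literally $m\geq g(k+1)$, whereas the equality version needs the extra observation, which you make when verifying $q_0$, that both values at the minimum equal $k+1$ exactly); your identity $g(k+1)=4\,SP_t(k)-2$ is the paper's \eqref{e:sp2}, and your formula $SP_t(k+3)=\min\bigl(4\,g(k+1)+2,\,4\,SP_t(k+2)-1\bigr)$ is the paper's \eqref{e:min} (the second arguments differ by one: the paper's candidate $4\,SP_t(k+2)-2$ is a residue-$2$ position that your analysis absorbs into the $g$-candidate; both positions are dominated, so both formulas are valid). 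The routes to these facts differ genuinely, though. The paper reads the candidate positions off the morphism $\delta$ acting on the first differences --- a quick but rather informal step --- and then resolves the min by an induction establishing $SP_2(k-1)<SP(k)$, with a base $3\leq k\leq 5$ checked against a table; you instead derive everything directly from Theorem \ref{t:tm} by residue classes, verify attainment of the value $k+3$ at each candidate explicitly, and replace the induction by the direct, induction-free separation bound $SP_t(k+2)>4\,SP_t(k)-2$, proved by the same four-case analysis. What your approach buys is a self-contained proof that does not depend on the corollary about $\delta$ and makes the ``only possible ways a new value can appear'' claim fully rigorous; what the paper's buys is brevity. Your two side remarks are also right: $SP_t(k)=\min\{n:PPL_t(n)\geq k\}$ by Lemma \ref{WRITEIT}, and the finiteness of every $SP_t(k)$, needed for the minima to be attained, follows either from \cite{fpz} (the Thue-Morse word avoids high powers) or, more economically, from your own computation $PPL_t(4\,SP_t(k)-2)=k+1$, which shows by induction on $k$ that every value is reached.
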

{\sc Proof.} Let us introduce $SP_2(k)$ as the minimal number $n$ such that $PPL_t(n)=PPL_t(n+1)=k$. By definition, $SP_2(k)\geq SP(k)$. The first values of $SP(k)$ and $SP_2(k)$ are given below.

\medskip
\begin{center}
\begin{tabular}{|c||c|c|c|c|c|}
  \hline
  $k$ &1 & 2& 3&4&5 \\
  \hline
  $SP(k)$ & 1& 2& 6&10& 26\\
  \hline
  $SP_2(k)$ & $\infty$ &2&6&22&38\\
  \hline
\end{tabular}
\end{center}

\medskip
From the definition of the morphism $\delta$ we immediately see that a new value $n=SP(k)$ can appear either in the middle of the $\delta$-image of $d(n')=d(SP_2(k-2))$, or in the middle of the $\delta$-image of $d(n'')$, where $n''=SP(k-1)-1$. The latter case is also the only possible way to get a new value $n=SP_2(k)$. So, 
\begin{equation}\label{e:min}
 SP(k)=\min(4SP_2(k-2)+2, 4SP(k-1)-2),
\end{equation}
\begin{equation}\label{e:sp2}
SP_2(k)=4SP(k-1)-2.
\end{equation}
As we see from the table, for $3\leq k\leq 5$, we have 
$SP(k-1)\leq SP_2(k-1)<SP(k)$. The first inequality is obvious, but let us prove the second one by induction. Its base is observed for $3\leq k\leq 5$, so, consider $k\geq 6$ such that for all $k'=k-3,k-2,k-1$ we have $SP_2(k'-1)<SP(k')$.

In particular, $SP_2(k-4)<SP(k-3)$, so due to \eqref{e:min}, we have $SP(k-2)=4SP_2(k-4)+2$, and so due to \eqref{e:sp2}, 
\begin{equation}\label{e:sp22}
 SP_2(k-1)=16SP_2(k-4)+6.
\end{equation}
 On the other hand, we have $SP_2(k-2)<SP(k-1)$, so, \eqref{e:min} becomes $SP(k)=4SP_2(k-2)+2$, and together with \eqref{e:sp2} this gives
 \begin{equation}\label{e:spk}
  SP(k)=16 SP(k-3)-6.
 \end{equation}
Combining \eqref{e:sp22}, \eqref{e:spk}, the induction base $SP_2(k-4)<SP(k-3)$ and the fact that all the values are integers, we obtain that $SP_2(k-1)<SP(k)$ for all $k\geq 3$. We also see that \eqref{e:spk} is true for all $k\geq 4$, proving this proposition. \hfill $\Box$

\medskip
The following corollary of the previous proposition can be proved by straightforward induction.

\begin{corollary}
In the 4-ary numeration system, we have  $SP(3k+2)=((12)^k2)_4$ for all $k \geq 0$; $SP(3k)=(1(12)^{k-1}2)_4$ for all $k \geq  1$; $SP(3k+1)=(2(12)^{k-1}2)_4$ for all $k \geq 1$.
\end{corollary}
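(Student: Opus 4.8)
The plan is to reduce everything to understanding how the recurrence $SP(k+3)=16\,SP(k)-6$ from Proposition~\ref{p:sp} acts on base-$4$ representations, and then to run a single clean induction within each of the three residue classes of the argument modulo $3$. The crucial preliminary observation, which I will carry along as part of the inductive hypothesis, is that every value produced by the three formulas ends in the digit $2$; this is exactly what makes the subtraction of $6$ behave uniformly.

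The first step is to establish a base-$4$ rewriting rule. Since $16=(100)_4$, multiplication by $16$ merely appends two zeros to the base-$4$ string, so if $SP(k)=(w2)_4$ for some digit string $w$, then $16\,SP(k)=(w200)_4$. Now subtract $6=(12)_4$, working from the right: the units digit gives $0-2$, producing a $2$ with a borrow; the next digit gives $0-1-1$, again producing a $2$ with a borrow; the next digit gives $2-0-1=1$ with no further borrow, precisely because that digit is $2\ge 1$ and absorbs the incoming borrow. Hence the borrow terminates after exactly two positions, and
\[
SP(k+3)=16\,SP(k)-6=(w122)_4 .
\]
In words, to pass from $SP(k)$ to $SP(k+3)$ one replaces the trailing digit $2$ by $1$ and appends $22$, that is, $(w2)_4\mapsto(w122)_4$.

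The second step is to feed the three claimed forms into this rule. Each of them ends in $2$, so the rule applies and a direct computation gives
\[
(12)^k2\ \mapsto\ (12)^{k+1}2,\qquad
1(12)^{k-1}2\ \mapsto\ 1(12)^k2,\qquad
2(12)^{k-1}2\ \mapsto\ 2(12)^k2 .
\]
Thus the rewriting rule sends each residue class to itself with the family index increased by one, and it preserves the "ends in $2$" invariant, so the induction closes. It then remains only to verify the base cases $SP(2)=(2)_4$, $SP(3)=(12)_4$, and $SP(4)=(22)_4$ from Proposition~\ref{p:sp}, which match the three formulas at $k=0$, $k=1$, and $k=1$ respectively; one application of the recurrence in each class launches the induction (the recurrence index is positive throughout, as required).

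I expect the only delicate point to be the base-$4$ subtraction in the first step: one must check that the borrow generated by subtracting $(12)_4$ propagates exactly two positions and then stops. This is guaranteed entirely by the invariant that every relevant $SP(k)$ ends in the digit $2$, so that the third digit from the right of $16\,SP(k)$ is a $2$ and absorbs the borrow. Once this invariant is built into the inductive hypothesis, the computation is routine and the three families propagate cleanly under $k\mapsto k+1$, which is what the corollary asserts.
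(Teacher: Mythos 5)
Your proof is correct and is precisely the ``straightforward induction'' from Proposition~\ref{p:sp} that the paper invokes without detail: the base cases $SP(2)=(2)_4$, $SP(3)=(12)_4$, $SP(4)=16\,SP(1)-6=(22)_4$ together with the base-$4$ rewriting $(w2)_4\mapsto(w122)_4$ induced by $SP(k+3)=16\,SP(k)-6$. Your explicit tracking of the borrow and of the ``ends in digit $2$'' invariant just makes the omitted computation precise; nothing further is needed.
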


Another direct consequence of Proposition \ref{p:sp} is
\begin{corollary}
 We have $$\displaystyle \lim \sup \frac{PPL_t(n)}{\ln n}=\frac{3}{4 \ln 2},$$ whereas $\displaystyle \lim \inf \frac{PPL_t(n)}{\ln n}=0$ since $PPL_t(4^m)=1$ for all $m$.
\end{corollary}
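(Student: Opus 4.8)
The plan is to treat the two assertions separately, using Proposition \ref{p:sp} as essentially the only input beyond elementary analysis. The liminf assertion is the easy half: for every $n\geq 1$ we have $PPL_t(n)\geq 1$ while $\ln n>0$ for $n\geq 2$, so the ratio $PPL_t(n)/\ln n$ is nonnegative and the liminf is at least $0$. Since $PPL_t(4^m)=1$ for every $m$, evaluating the ratio along the subsequence $n=4^m$ gives $1/(m\ln 4)\to 0$, which forces the liminf to equal $0$.

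The limsup rests on the observation that $SP_t$ is the inverse function of $PPL_t$ in the following sense: if $PPL_t(n)=k$ then $n\geq SP_t(k)$ by minimality, so $PPL_t(n)/\ln n\leq k/\ln SP_t(k)$, and this bound is attained at $n=SP_t(k)$ (for fixed value $k$, the quantity $k/\ln n$ is decreasing in $n$). Hence the whole question is governed by $g(k):=k/\ln SP_t(k)$. I would first solve the linear recurrence of Proposition \ref{p:sp}: its fixed point is $2/5$, so writing $k=3j+r$ with $r\in\{1,2,3\}$ and $j\geq 0$ one gets the closed form $SP_t(k)-\tfrac{2}{5}=16^{\,j}\bigl(SP_t(r)-\tfrac{2}{5}\bigr)$, where $SP_t(r)>\tfrac{2}{5}$ in each of the three cases. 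Taking logarithms and using $16=2^4$ yields
\[
\ln SP_t(k)=\frac{4}{3}(k-r)\ln 2+\ln\left(SP_t(r)-\frac{2}{5}\right)+o(1),
\]
from which $g(k)\to 1\big/\bigl(\tfrac{4}{3}\ln 2\bigr)=3/(4\ln 2)$ as $k\to\infty$. In particular the closed form also shows $SP_t(k)\to\infty$, so the indices $SP_t(k)$ form a genuine subsequence tending to infinity.

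With this asymptotics in hand both bounds on the limsup follow. For the lower bound I would run along $n=SP_t(k)$, on which $PPL_t(n)/\ln n=g(k)\to 3/(4\ln 2)$, giving $\limsup\geq 3/(4\ln 2)$. For the matching upper bound, fix $\varepsilon>0$ and consider large $n$, setting $k=PPL_t(n)$. If $k$ is large then $PPL_t(n)/\ln n\leq g(k)<3/(4\ln 2)+\varepsilon$ by the convergence of $g$; if $k$ stays bounded then $PPL_t(n)/\ln n\leq k/\ln n\to 0$. In either case the ratio lies below $3/(4\ln 2)+\varepsilon$ for all sufficiently large $n$, so $\limsup\leq 3/(4\ln 2)$, and the two bounds coincide.

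I expect the only delicate points to be the clean separation of the two cases in the upper bound and the bookkeeping in the closed form for $SP_t(k)$ (tracking the residue $r$ modulo $3$); the asymptotic constant itself is essentially forced, being $3/(4\ln 2)$ because the recurrence multiplies $SP_t$ by $16=2^4$ over a step of length $3$ in $k$.
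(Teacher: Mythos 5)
Your proposal is correct and follows exactly the route the paper intends: it states this corollary as a direct consequence of Proposition \ref{p:sp}, and your argument simply fills in the omitted details (the fixed point $2/5$ of the recurrence giving $\ln SP_t(k)\sim\frac{4}{3}k\ln 2$, the inverse-function relation between $PPL_t$ and $SP_t$, and the bounded-versus-unbounded case split for the upper bound). All steps check out, including the careful handling of the small case $k=1$ where $\ln SP_t(k)=0$, which your case split absorbs harmlessly.
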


\section{Regularity}
The sequence $(PPL_t(n))$ is closely related to the Thue-Morse word, the most classical example of a 2-automatic sequence. In general, a sequence $w=(w[n])$ is called $k$-automatic if there exists a finite automaton such that for the input equal to the $k$-ary representation of $n$, the output is equal to $w[n]$. Equivalently, due to a theorem by Cobham \cite{cobham72}, a sequence is $k$-automatic if and only if it is an image under a coding $c: \Sigma \to \Delta$ of a fixed point of a $k$-uniform morphism $\varphi$: $w=c(w')$, where $w'=\varphi(w')$ \cite[Ch.\ 6]{a_sh}. So, the Thue-Morse word is 2-automatic since it is a fixed point of the $2$-uniform morphism $a \to ab, b \to ba$, and the sequence $(d(n))$ is 4-automatic since it is a fixed point of $\delta$. In both cases, the coding can be taken to be trivial: $c(x)=x$ for every letter $x$. It is also well-known that a sequence is $k$-automatic if and only if it is $k^m$-automatic for any integer $m$, so, the Thue-Morse word is also 4-automatic and the sequence $(d(n))$ is 2-automatic.

A more general notion of a {\it $k$-regular sequence} was introduced by Allouche and Shallit \cite{a_sh_kreg}, see also  \cite[Ch.\ 16]{a_sh}. A sequence $(a(n))$ is called $k$-regular (on $\mathbb Z$) if there exists a finite number of sequences $\{(a_1(n)),\ldots,(a_s(n))\}$  such that for every integer $i \geq 0$ and $0\leq b <k^i$ there exist $c_1,\ldots,c_s \in \mathbb Z$ such that for all $n\geq 0$ we have

\[a(k^i n +b)=\sum_{1\leq j \leq s} c_j a_j(n).\]

It is also known that a sequence is $k$-automatic if and only if it is $k$-regular and takes on finitely many values  \cite[Thm.\ 16.1.4]
{a_sh}.  Moreover, a sequence $a=(a(n))$ is $k$-regular if and only if there exist $r$ sequences $a_1=a,\ldots,a_r$ and a matrix-valued morphism $\mu$ such that if
\[V(n)=\begin{pmatrix} a_1(n) \\ a_2(n) \\ \cdots \\ a_r(n) \end{pmatrix},\]
then
\[V(kn+b)=\mu(b)V(n)\]
for $0\leq b < k$ \cite[Thm.\ 16.1.3]{a_sh}.
Many sequences related to $k$-automatic words are $k$-regular, as it was shown by Charlier, Rampersad and Shallit \cite{crs}. However, it seems that the general approach from the mentioned paper does not directly work for the palindromic length, so, we have to prove the following corollary only for the Thue-Morse word.

\begin{corollary}
 The sequence $PPL_t(n)$ is 4-regular.
\end{corollary}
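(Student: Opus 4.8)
The plan is to exhibit an explicit finite generating set of sequences and a matrix-valued morphism $\mu$ that realizes the recurrence $V(4n+b)=\mu(b)V(n)$ from the criterion \cite[Thm.\ 16.1.3]{a_sh}, thereby certifying $4$-regularity directly. The difficulty is that the recurrences in Theorem~\ref{t:tm} are \emph{not} linear: equation \eqref{e:4n+2} involves $\min(PPL_t(n),PPL_t(n+1))$. So the first task is to linearize the $\min$. I would do this by replacing the single scalar sequence $PPL_t(n)$ with a vector that records enough local data to resolve the minimum deterministically. Concretely, I would track the first-difference $d_t(n)=PPL_t(n+1)-PPL_t(n)\in\{-1,0,1\}$, which by the Corollary is the $4$-automatic fixed point of $\delta$, and use it to decide which of $PPL_t(n)$ or $PPL_t(n+1)$ is the smaller: if $d_t(n)\geq 0$ then $\min=PPL_t(n)$, and if $d_t(n)\le 0$ then $\min=PPL_t(n+1)$.

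The natural state vector is therefore
\[
V(n)=\begin{pmatrix} PPL_t(n) \\ PPL_t(n+1) \\ 1 \end{pmatrix},
\]
where the constant entry $1$ is included so that the additive shifts $+1$ and $+2$ appearing in \eqref{e:4n+1}--\eqref{e:4n+3} can be written as linear (affine-made-linear) combinations. Using Theorem~\ref{t:tm} together with the boundary values $PPL_t(4n+4)=PPL_t(n+1)$, each of the four residues $b\in\{0,1,2,3\}$ expresses the pair $(PPL_t(4n+b),PPL_t(4n+b+1))$ as an integer combination of $PPL_t(n)$, $PPL_t(n+1)$, the resolved $\min(PPL_t(n),PPL_t(n+1))$, and $1$. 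The one genuinely nonlinear ingredient, the $\min$, is handled by \emph{case-splitting on the value of} $d_t(n)$: since $(d_t(n))$ is itself $4$-automatic (hence $4$-regular), I would augment $V(n)$ with the finitely many coordinate sequences of the matrix representation of $(d_t(n))$. Because that representation is finite-dimensional and $d_t(n)$ determines the sign needed in each $\min$, the full vector $\widetilde V(n)$ obtained by concatenating $V(n)$ with the automatic-sequence coordinates satisfies a genuine linear relation $\widetilde V(4n+b)=\mu(b)\widetilde V(n)$ for constant integer matrices $\mu(0),\mu(1),\mu(2),\mu(3)$.

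The key steps, in order, are: (1) set up $\widetilde V(n)$ as above, combining the $PPL_t$-pair, the constant $1$, and the matrix coordinates of the automatic sequence $(d_t(n))$; (2) translate each of \eqref{e:4n}--\eqref{e:4n+3} into a block of rows of $\mu(b)$, reading off the $+1$, $+2$ shifts as multiples of the constant coordinate; (3) verify that the $\min$ in \eqref{e:4n+2} is a fixed linear functional of $\widetilde V(n)$ once conditioned on the automatic state, so that no residue requires a case distinction inside $\mu(b)$ itself; and (4) invoke \cite[Thm.\ 16.1.3]{a_sh} to conclude $4$-regularity, noting that $PPL_t(n)$ is the first coordinate of $\widetilde V(n)$ and hence a $\mathbb Z$-linear image of a $4$-regular vector sequence. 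The main obstacle is step~(3): I must confirm that the sign data needed to unfold the $\min$ is genuinely a \emph{linear} function of the augmenting automatic coordinates (equivalently, that $\min(PPL_t(n),PPL_t(n+1))=PPL_t(n)+\min(d_t(n),0)$ and that $\min(d_t(n),0)$ lies in the linear span of the chosen automatic coordinate sequences), since a residual nonlinearity here would break the matrix recurrence and force a larger, carefully chosen generating family.
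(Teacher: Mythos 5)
Your proposal is correct and follows essentially the same route as the paper: both linearize the $\min$ in \eqref{e:4n+2} via $\min(PPL_t(n),PPL_t(n+1))=PPL_t(n)+\min(d_t(n),0)$ and absorb the nonlinearity into the $4$-automatic (hence $4$-regular) first-difference sequence, then combine matrix representations as in \cite[Thm.\ 16.1.3]{a_sh}. The obstacle you flag in step~(3) is resolved exactly as the paper does it: $\min(d_t(n),0)=-e(n)$, where $e(n)$ is the indicator of $d_t(n)=-1$, which is a coding of the automatic sequence $(d_t(n))$ and hence lies in the linear span of its state-indicator coordinates.
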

{\sc Proof.} The sequence $(d(n))$, here considered on the alphabet $\{-1,0,1\}$, is 4-automatic as a fixed point of $\delta$ and thus 4-regular, as well as its image $(e(n))$ under the following coding: 
$$e(n)=\begin{cases} 1, d(n)=-1,\\ 0, \mbox{~otherwise}.\end{cases} $$
So, for each $b=0,1,2,3$ it is sufficient to combine the respective matrices $\mu$ for $d$ and $e$ with the following equalities equivalent to Theorem \ref{t:tm}:
\begin{align*}
 PPL_t(4n)&=PPL_t(n),\\
 PPL_t(4n+1)&=PPL_t(n)+1,\\
 PPL_t(4n+2)&=PPL_t(n)+2-e(n),\\
 PPL_t(4n+3)&=PPL_t(n)+d(n)+1. ~~~~~~~~\hfill ~ \Box
\end{align*}

\begin{remark}{\rm
 In fact, every sequence with $k$-automatic first differences is $k$-regular, which can be proven with a similar construction, perhaps with several auxiliary sequences like $(e(n))$.
 }
\end{remark}

\section{Conclusion}
Up to my knowledge, the  results of this paper are thus far only precise formulas for the prefix palindromic length of a non-trivial infinite word not constructed especially for that. 
Even for famous and simple examples like Toeplitz words or the Fibonacci word \seqnum{A003849}, lower bounds for the prefix palindromic length are difficult \cite{f:num,dlt}. The only more or less universal lower bounds for all $p$-power-free words are those from the first paper on the subject \cite{fpz}, with $SP(k)$ growing faster than any constant to the power $k$. Later \cite{f:num}, some calculations allowed a reasonable exponential conjecture on the $SP(k)$ of the Fibonacci word, but it is not clear how to prove it. So, the following more particular open questions can be added to general Conjectures \ref{c1} and \ref{c2}.

\begin{problem}
 Find a precise formula  for the prefix palindromic length of the period-doubling word \seqnum{A096268}, or a lower bound for its $\lim \sup$.
\end{problem}

\begin{problem}
 Find a precise formula  for the prefix palindromic length of the Fibonacci word \seqnum{A003849}, or a lower bound for its $\lim \sup$.\end{problem}
 
 \begin{problem}
  Is it true that the function $PPL_u(n)$ is $k$-regular for any $k$-automatic word $u$? Fibonacci-regular for the Fibonacci word?
 \end{problem}

\end{document}